\numberwithin{equation}{section}
\newcommand{\D}{\displaystyle}
\newcommand{\R}{{\mathbb R}}
\newcommand{\T}{{\mathbb T}}
\newcommand{\Z}{{\mathbb Z}}
\newcommand{\vers}{\operatornamewithlimits{\to}}
\newcommand{\er}{{\mathbb R}}
\newcommand{\en}{{\mathbb N}}
\newcommand{\zed}{{\mathbb Z}}
\theoremstyle{plain}
\newtheorem{Th}{Theorem}[section]
\newtheorem{Le}{Lemma}[section]
\newtheorem{Pro}{Proposition}[section]
\newtheorem{Cor}{Corollary}[section]
\theoremstyle{definition}
\newtheorem{Rem}{Remark}[section]
\title{Localization for the random displacement model at weak disorder}
\author{Fatma Ghribi} \author{Fr{\'e}d{\'e}ric Klopp}
\address[Fatma Ghribi]{D{\'e}partement de Math{\'e}matiques,
  Facult{\'e} des Sciences de Monastir \\
  Avenue de l'Environnement 5019 Monastir, Tunisie}
\email{\href{mailto:Fatma.Ghribi@fsm.rnu.tn}{Fatma.Ghribi@fsm.rnu.tn}}
\address[Fr{\'e}d{\'e}ric Klopp]{LAGA, Institut Galil{\'e}e, U.R.A 7539 C.N.R.S,
  Universit{\'e} de Paris-Nord, Avenue J.-B.  Cl{\'e}ment, F-93430
  Villetaneuse, France\\ et \\ Institut Universitaire de France}
\email{\href{mailto:klopp@math.univ-paris13.fr}{klopp@math.univ-paris13.fr}}
\keywords{}
\subjclass{}
\thanks{The work of F.K. was supported by the grant
  ANR-08-BLAN-0261-01.}
\begin{document}

\begin{abstract}
  This paper is devoted to the study of the random displacement model
  on $\R^d$.  We prove that, in the weak displacement regime, Anderson
  and dynamical localization hold near the bottom of the spectrum
  under a generic assumption on the single site potential and a fairly
  general assumption on the support of the possible
  displacements. This result follows from the proof of the existence
  of Lifshitz tail and of a Wegner estimate for the model under
  scrutiny.
  \vskip.5cm
  \par\noindent \textsc{R{\'e}sum{\'e}.}
  Cet article est consacr{\'e} {\`a} l'{\'e}tude d'un mod{\`e}le de petits
  d{\'e}placements al{\'e}atoires. Sous une hypoth{\`e}se g{\'e}n{\'e}rique sur le
  potentiel de simple site et des hypoth{\`e}ses assez g{\'e}n{\'e}rales sur les
  d{\'e}placements autoris{\'e}s, on d{\'e}montre que le bas du spectre est
  exponentiellement et dynamiquement localis{\'e} dans la limite des
  petits d{\'e}placements. La preuve repose sur la preuve d'une estim{\'e}e de
  Lifshitz et d'une estim{\'e}e de Wegner pour le mod{\`e}le {\'e}tudi{\'e}.
\end{abstract}
\setcounter{section}{-1}
\maketitle
\section{Introduction}
\label{sec:introduction}
We consider the following random displacement model
\begin{equation}
  \label{eq:1}
  H_{\lambda,\omega}=-\Delta + p + q_{\lambda,\omega}\text{ where }
  q_{\lambda,\omega}(x)=\sum_{\gamma \in\Z^d} q(x-\gamma- \lambda
  \omega_{\gamma}), 
\end{equation}
acting on $L^2({\er}^d)$. We assume the following:
\begin{description}
\item[(H.0.0)] The potential $p$ is a real valued, $\Z^d$-periodic
  function.
\item[(H.0.1)] The single site potential $q$ is a twice continuously
  dif\-ferentiable, real valued function and compactly supported.
\item[(H.0.2)] $\omega:=(\omega_{\gamma})_{\gamma \in {\zed}^d}$ is a
  collection of non trivial, independent, identically distributed,
  bounded random variables; let $K\subset\R^d$ be the support of their
  common distribution.
\item[(H.0.3)] $\lambda$ is a small positive coupling constant.
\end{description}
Under these assumptions, $H_{\lambda,\omega}$ is ergodic and, for all
$\omega$, $H_{\lambda,\omega}$ is self-adjoint on the standard Sobolev
space $\mathcal{H}^2({\er}^d)$. The theory of ergodic operators
teaches us that the spectrum of $H_{\lambda,\omega}$ is
$\omega$-almost surely independent of $\omega$ (see
e.g.~\cite{Kirsch:1989,MR94h:47068}); we
denote it by $\Sigma_{\lambda}$ .\\
Our assumptions on $q_{\lambda,\omega}$ imply that $\Sigma_{\lambda}$
is bounded below. Define $E_{\lambda}:=\inf\Sigma_{\lambda}$.\\
The goal of the present paper is to study the nature of the spectrum
of $H_{\lambda,\omega}$ near $E_{\lambda}$. A result typical of the
class of results that we will prove is
\begin{Th}
  \label{thr:2}
  Assume $p$ is not constant and that the random variables
  $(\omega_\gamma)_{\gamma\in\Z^d}$ are uniformly distributed in the
  unit ball in $\R^d$.\\
  Then, there exists $\varepsilon_0>0$ and $\lambda_0>0$ such that,
  for a generic single site potential $q$ such that
  $\|q\|_\infty\leq\varepsilon_0$, for $\lambda \in(0, \lambda_0]$,
  Anderson and strong dynamical localization near the bottom of the
  spectrum $E_\lambda$. Namely, there exist
  $E_{\lambda,1}>E_{\lambda}$ such that $H_{\lambda,\omega}$ has dense
  pure point spectrum on $[E_{\lambda}, E_{\lambda,1}]$ almost surely,
  and each eigenfunction associated to an energy in this interval
  decays exponentially as $|x| \rightarrow \infty$, and strong
  dynamical localization holds in the same region.
\end{Th}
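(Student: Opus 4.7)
\bigskip\noindent
\textbf{Proof plan.}

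The plan is to reduce the theorem to a standard multiscale analysis (MSA) scheme of Fr\"ohlich--Spencer type, in the bootstrap form of Germinet--Klein, which upgrades finite-volume spectral information into Anderson and strong dynamical localization. The two inputs that MSA requires are (i) an initial-length-scale estimate near $E_\lambda$ and (ii) a polynomial Wegner estimate; as the abstract indicates, (i) will be obtained from a Lifshitz tail estimate and (ii) from a dedicated spectral analysis of the random displacements.

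First I would analyze the unperturbed periodic operator $H^{\mathrm{per}}=-\Delta+p+\sum_\gamma q(\cdot-\gamma)$ and its ground state energy $E_0$. Under (H.0.1)--(H.0.3), the map sending a periodic displacement configuration $\omega\in K^{\Z^d/N\Z^d}$ to the ground state energy of the associated periodic operator is real-analytic in $\omega$ for $\lambda$ small, by Kato--Rellich perturbation theory applied fiberwise in the Floquet decomposition. By ergodicity, $E_\lambda$ is the infimum of these periodic ground state energies. The role of the \emph{genericity} on $q$ and of the hypothesis that $p$ is non-constant is to guarantee that this infimum is attained only at specific \emph{optimal} configurations $\omega^\ast\in K^{\Z^d}$ (typically boundary points of $K$), with a non-degenerate Hessian transverse to the symmetries. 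From this one deduces an IDS lower bound of the form $N(E_\lambda+E)\lesssim \exp(-c E^{-d/2})$ as $E\downarrow 0$ by the standard Dirichlet/Neumann bracketing argument: restricting to a cube of size $\ell\sim E^{-1/2}$, non-degeneracy forces displacements in such a cube to be near one of the optimal profiles, an event of exponentially small probability in $\ell^d$.

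The main obstacle, as always for displacement models, is the Wegner estimate: the map $\omega_\gamma\mapsto H_{\lambda,\omega}$ is \emph{not} monotone, so neither Wegner's original spectral averaging nor the Combes--Hislop--Klopp technique applies out of the box. Here I would exploit that at weak $\lambda$ and energies near $E_\lambda$, only an effective one-parameter combination of the displacements contributes to leading order. Concretely, expanding $H_{\lambda,\omega}-H^{\mathrm{per}}$ to first order in $\lambda$ yields a sum of local perturbations $-\lambda\,\omega_\gamma\cdot(\nabla q)(\cdot-\gamma)$, and projecting onto the ground state Wannier basis of the optimal periodic operator one obtains an effective discrete Schr\"odinger-type operator whose ``potential'' is a smooth, \emph{non-trivial} function of $\omega_\gamma$ (by the genericity of $q$). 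A vector-valued spectral averaging argument in the spirit of Klopp--Nakamura, applied along the direction of $\omega_\gamma$ in which this effective potential varies most, then yields the required H\"older Wegner bound for all boxes $\Lambda_L$ and energies in $[E_\lambda,E_\lambda+\eta]$ for $\eta$ small.

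With the Lifshitz estimate supplying the initial MSA hypothesis at some large scale $L_0$ and energy window $[E_\lambda,E_{\lambda,1}]$, and with the Wegner estimate valid on the same window, I would then run the Germinet--Klein bootstrap MSA. This yields exponential decay of fractional moments of finite-volume Green's functions at all scales and all energies in $[E_\lambda,E_{\lambda,1}]$, hence pure point spectrum with exponentially decaying eigenfunctions and strong dynamical localization in that interval, exactly as claimed. The hard step is unambiguously the Wegner estimate, since the Lifshitz asymptotics and the MSA machinery are by now well codified, while controlling the non-monotone dependence on $(\omega_\gamma)_\gamma$ uniformly for \emph{generic} $q$ and small $\lambda$ is where the model-specific work concentrates.
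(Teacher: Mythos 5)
Your overall architecture (Lifshitz tails for the initial scale estimate, a Wegner estimate, then bootstrap multiscale analysis) is exactly the paper's, and the closing MSA step is indeed routine once the two inputs are in place. But the two model-specific inputs are where your sketch has genuine gaps. First, the mechanism you invoke for the uniqueness of the minimizing configuration --- ``a non-degenerate Hessian transverse to the symmetries'' --- is not the right one and would fail if made precise: to leading order in $\lambda$ the ground state energy $E(\lambda,\zeta)$ of the uniformly shifted periodic operator is \emph{affine} in $\zeta$, with gradient $\lambda v(q)$ where $v(q)=-\int\nabla q\,|\varphi_0|^2$. The quadratic lower bound $\nabla_\zeta E\cdot(\zeta-\zeta(\lambda))\gtrsim\lambda|\zeta-\zeta(\lambda)|^2$ comes from the fact that $v(q)\neq0$ for generic small $q$ (this is where non-constancy of $p$ enters) together with the strict convexity of the unit ball: the minimizer sits on the sphere and the positive principal curvatures of $\partial K$ give the quadratic coercivity. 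Moreover, knowing the minimizer over \emph{constant} configurations does not by itself identify the minimizer over all of $K^{\Z^d/N\Z^d}$; the paper gets this (and the Lifshitz tail) from a two-sided operator inequality sandwiching $H_{\lambda,\omega}-E_\lambda$ between effective \emph{discrete} Anderson models on the ground Floquet band, with scalar couplings $v\cdot(\omega_\gamma-\zeta(\lambda))\pm C\alpha\|\omega_\gamma-\zeta(\lambda)\|^2\geq0$. Your Wannier-projection remark points in this direction but the sandwich (control of the off-diagonal band coupling and of the $\lambda^2$ remainder) is the substantive step you have omitted.

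Second, for the Wegner estimate your proposal to do ``vector-valued spectral averaging along the direction in which the effective potential varies most'' is too vague to close, because that direction depends on $\gamma$ and on the configuration, and because the first-order single-site potential $-\sigma\cdot\nabla q$ is not sign-definite. The paper's device is to pass to polar coordinates $\omega_\gamma=r_\gamma\sigma_\gamma$, condition on all angular variables at once, and apply the Hislop--Klopp Wegner estimate for non-sign-definite alloy models to the resulting scalar family $(r_\gamma)$ (the uniform law on the ball gives the needed regularity of the conditional radial densities). Crucially, that estimate is only available for energies at distance $\gtrsim\lambda$ \emph{below} the band edge $E_0$ of the unperturbed periodic operator; the fact that $E_\lambda\leq E_0-\lambda/C$ (a consequence of $v(q)\neq0$ and $\zeta(\lambda)\in\partial K$) is what legitimizes the whole low-energy window, and it is absent from your argument. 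Without it, neither the non-sign-definiteness nor the $O(\lambda^2)$ correction can be controlled.
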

\noindent For details on strong dynamical localization, we refer
to\cite{MR2002m:82035}.\\
When studying its spectral properties, an important feature of
$H_{\lambda,\omega}$ is that it depends non monotonically (see
e.g.~\cite{FN:08a}) on the random variables $(\omega_{\gamma})_{\gamma
  \in {\zed}^d}$, even if $q$ is assumed to be sign-definite. As each
of the random variables $(\omega_{\gamma})_{\gamma \in {\zed}^d}$ is
multidimensional, there cannot be a real monotonicity. Nevertheless,
we exhibit a set of assumptions on the single site potential $q$ and
on the random variables $(\omega_{\gamma})_{\gamma \in {\zed}^d}$ that
guarantee that, for sufficiently small disorder $\lambda$,
\begin{itemize}
\item there exists a neighborhood of $E_{\lambda}$ where
  $H_{\lambda,\omega}$ admits a Wegner estimate,
\item $H_{\lambda,\omega}$ exhibits a Lifshitz tails at $E_{\lambda}$.
\end{itemize}
It is well known that such results then entail Anderson and dynamical
localization near $E_\lambda$ (see e.g.~\cite{MR2002m:82035}). \\
Our assumptions are presumably not optimal; we show that they hold for
a small generic $q$. We need to assume some regularity for the
distribution of the random variables. As they are multi-dimensional,
absolute continuity with respect to the $d$-dimensional Lebesgue
measure is not necessary; actually, they can be concentrated on
subsets of dimension one (see section~\ref{sec:wegner-estimate-1}). As
for the support of the single site random variable, they can have a
wide variety of shapes but need to satisfy a type of strict convexity
condition at certain points; we refer to
section~\ref{sec:valid-assumpt-h.1} for more details.
\par Due to the non monotonicity of $H_{\lambda,\omega}$, few rigorous
results are known for the random displacement model in dimension
larger than 1. \\
For the one-dimensional displacement model, localization at all
energies was proven in~\cite{Buschmann:2001} and, with different
methods and, under more general assumptions, in~\cite{Damanik:2002}.
These proofs establish the Wegner estimate using two-parameter
spectral averaging and use lower bounds on the Lyapunov exponent to
replace the Lifshitz tails behavior. \\
For the multi-dimensional random displacement model, the only
available result on localization prior to the present paper
was~\cite{Klopp:1993} establishing the existence of a localized region
for the semi-classical operator $-h^2 \Delta+p +q_{\lambda,\omega}$
when $h$ is sufficiently small. The Wegner estimate was established
through a careful analysis of quantum tunneling. The Lifshitz tails
behavior was neither proved nor used in the energy region under
consideration, because of the semi-classical regime, the model is in a
large disorder regime.\\
It has been discovered recently that, for random displacement models,
Lifshitz tails need not hold (see~\cite{BaLoSto:08,FN:09}).\\
Related to the study of the occurrence of the Lifshitz tails, an
important point is the study of the infimum of the almost sure
spectrum and, in particular of the finite volume configurations of the
random parameter, if any, that give rise to the same ground state
energy. Such a study for non monotonous models has been undertaken
recently in~\cite{MR2430638,FN:08a}. In the present paper, we give an
analysis of those configuration in the small displacement case.
\section{The main results}
\label{sec:main-results}
For $n\geq0$, let $\Lambda_n=[-n-1/2,n+1/2]^d$. For
$(\omega_\gamma)_{\gamma\in{\zed}}$, define the differential
expression
\begin{equation}
  \label{eq:3}
  H_{\lambda,\omega,n}=-\Delta+ p+\sum_{\beta \in
    (2n+1){\zed}^d}\;\; \sum_{\gamma \in {\zed}^d/(2n+1){\zed}^d}
  q(x-\beta-\gamma-\lambda \omega_{\gamma}).
\end{equation}
Let $H^P_{\lambda,\omega,n}$ be restriction of $H_{\lambda,\omega,n}$
to the cube $\Lambda_n$ with periodic boun\-dary
conditions. $H_{\lambda,\omega,n}^P$ has only discrete spectrum and is
bounded from below.  For $E \in \er$, the {\it integrated density of
  states} is, as usual, defined by
\begin{equation*}
  N_{\lambda}(E)=\lim_{n\to+\infty}\frac1{(2n+1)^d}
  \#\{\text{eigenvalues of }H_{\lambda,\omega,n}^P\text{ in
  }(-\infty,E]\}.
\end{equation*}
We refer to~\cite{Kirsch:1989,MR94h:47068} for details on this
function and the proofs of various standard results.
\subsection{The assumptions}
\label{sec:assumptions}
We now state our assumptions on the random potential. Therefore, we
introduce the periodic operator obtained by shifting all the single
site potentials by exactly the same amount i.e. for $\zeta\in K$ (see
assumption (H.0.2)), let
\begin{equation}
  \label{eq:6}
  H_\zeta=H_{\lambda,\overline{\zeta}}=
  -\Delta+p+\sum_{\gamma\in\Z^d}q(x-\gamma-\lambda\zeta).
\end{equation}
Here and in the sequel, $\overline{\zeta}$ denotes the constant vector
with entries all equal to $\zeta$ i.e. $\overline{\zeta}=
(\zeta)_{\gamma\in\Z^d}$.\\
The spectrum of the $\Z^d$-periodic operator $H_\zeta$ is purely
absolutely continuous; it is a union of intervals (see
e.g.~\cite{MR58:12429c}). Let $E(\lambda,\zeta)$ be the infimum of
this spectrum. As $E(\lambda,\zeta)$ is the bottom of the spectrum of
the periodic operator $H_\zeta$, we know that it is a simple Floquet
eigenvalue associated to the Floquet quasi-momentum $\theta=0$ (see
section~\ref{sec:floquet-theory} for more details); hence, it is a
twice continuously differentiable function of $\zeta$.\\
We assume that
\begin{description}
\item[(H.1.1)] there exits $\lambda_0>0$ such that, for
  $\lambda\in(0,\lambda_0)$, there exists a unique point
  $\zeta(\lambda)\in K$ so that
  \begin{equation*}
    E(\lambda,\zeta(\lambda))=\min_{\zeta\in K}E(\lambda,\zeta);
  \end{equation*}
\item[(H.1.2)] there exists $\alpha_0>0$ such that, for
  $\lambda\in(0,\lambda_0)$ and $\zeta\in K$, one has
  \begin{equation}
    \label{eq:10}
    \nabla_\zeta E(\lambda,\zeta(\lambda))\cdot(\zeta-\zeta(\lambda))
    \geq\alpha_0\,\lambda\,|\zeta-\zeta(\lambda)|^2.
  \end{equation}
\end{description}
In section~\ref{sec:valid-assumpt-h.1}, we discuss concrete conditions
on $p$, $q$ and $K$ that ensure that assumption (H.1) is valid. We now
turn to our main results.
\subsection{The results}
\label{sec:results}
We start with a description of the realizations of the random
potential where the infimum of the almost sure spectrum is
attained. Then, we state our results on Lifshitz tails, a Wegner
estimate and the result on localization.
\subsubsection{The infimum of the almost sure spectrum}
\label{sec:infimum-spectrum}
Of course, as $\Sigma_\lambda$ is the almost sure spectrum, almost all
realizations have their infimum as the infimum of the spectrum. The
realizations we are interested in are those that attain this infimum
when restricted to a finite volume. In the present paper, we construct
these restrictions using periodic boundary conditions, actually
considering periodic realizations of the random
potential. In~\cite{MR2430638,BaLoSto:08,FN:08a,FN:09}, the
restrictions were performed using Neumann boundary
conditions.\\
We define periodic configurations of the random potential. Fix
$n\geq0$ and, for
$(\omega_\gamma)_{\gamma\in{\zed}^d/(2n+1){\zed}^d}$, consider the
differential operator $H_{\lambda,\omega,n}$ defined by~\eqref{eq:3}
with domain $\mathcal{H}^2(\R^d)$.  It is $(2n+1)\Z^d$-periodic; let
$E_0^n(\lambda\omega)$ be its ground state energy i.e. the infimum of
its spectrum. \\
One has
\begin{Th}
  \label{thr:1}
  Under assumptions (H.0) and (H.1), there exists $\lambda_0>0$ such
  that, for any $n\geq0$, for $\lambda\in(0,\lambda_0]$, on
  $K^{(2n+1)^d}$, the function $\omega\mapsto E_0^n(\lambda\omega)$
  reaches its infimum $E(\lambda,\zeta(\lambda))$ at a single point,
  the point
  $\omega=(\zeta(\lambda))_{\gamma\in{\zed}^d/(2n+1){\zed}^d}$.
\end{Th}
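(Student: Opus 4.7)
The plan is to show that $\omega^*:=(\zeta(\lambda))_{\gamma\in\Z^d/(2n+1)\Z^d}$ is the unique minimizer of $\omega\mapsto E_0^n(\lambda\omega)$ on the compact set $K^{(2n+1)^d}$ by a first-order analysis at $\omega^*$ combined with (H.1.2). At the constant configuration $\omega^*$, the operator $H_{\lambda,\omega^*,n}$ coincides with the $\Z^d$-periodic operator $H_{\zeta(\lambda)}$, whose ground state is the simple Floquet eigenvalue at quasi-momentum $\theta=0$; consequently $E_0^n(\lambda\omega^*)=E(\lambda,\zeta(\lambda))$ and the infimum is already attained. For $\lambda$ small, the ground state of $H_{\lambda,\omega,n}$ remains simple and isolated (by perturbation off $H_0$), so $E_0^n(\lambda\,\cdot)$ is smooth in $\omega$ on $K^{(2n+1)^d}$.

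By Hellmann-Feynman, with $u_\omega$ the normalized positive ground state,
\[\nabla_{\omega_\gamma}E_0^n(\lambda\omega)=-\lambda\int_{\R^d}\nabla q(x-\gamma-\lambda\omega_\gamma)\,|u_\omega(x)|^2\,dx.\]
At $\omega=\omega^*$ one has $u_{\omega^*}=(2n+1)^{-d/2}\phi_{\zeta(\lambda)}$; using $\Z^d$-periodicity of $|\phi_{\zeta(\lambda)}|^2$ together with the Hellmann-Feynman identity for the periodic problem,
\[\nabla_{\omega_\gamma}E_0^n(\lambda\omega^*)=(2n+1)^{-d}\,\nabla_\zeta E(\lambda,\zeta(\lambda)),\]
the same vector for every $\gamma$. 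Applying (H.1.2) to each $\omega_\gamma\in K$ and summing then delivers the strict quadratic lower bound
\[\sum_\gamma\nabla_{\omega_\gamma}E_0^n(\lambda\omega^*)\cdot(\omega_\gamma-\zeta(\lambda))\;\geq\;\frac{\alpha_0\lambda}{(2n+1)^d}\,\|\omega-\omega^*\|^2\]
on the first-order directional derivative at $\omega^*$, which is strictly positive for every $\omega\in K^{(2n+1)^d}\setminus\{\omega^*\}$.

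To convert this into $E_0^n(\lambda\omega)>E(\lambda,\zeta(\lambda))$ I would use the variational bound $E_0^n(\lambda\omega)\geq E(\lambda,\zeta(\lambda))+\langle W_\omega u_\omega,u_\omega\rangle$ with $W_\omega:=H_{\lambda,\omega,n}-H_{\zeta(\lambda)}$, Taylor-expand $W_\omega$ site by site in $\lambda$, and show that the leading contribution to $\langle W_\omega u_\omega,u_\omega\rangle$ matches the first-order directional derivative computed above, modulo corrections of size $O(\lambda^2\|\omega-\omega^*\|^2)$ controlled by $\|q\|_{C^2}$ and the diameter of $K$. For $\lambda_0$ small enough the quadratic main term then dominates the remainder.

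The hard part is to obtain this error bound \emph{uniformly in $n$}. The spectral gap above the ground state of $H_{\zeta(\lambda)}$ restricted to $\Lambda_n$ with periodic boundary conditions scales like $1/n^2$ (due to the small-$|\theta|$ curvature of the bottom Floquet band), so naive second-order Schr\"odinger perturbation theory produces a Hessian whose norm may grow with $n$. Circumventing this requires a Schur/Feshbach reduction to the $\Z^d$-periodic sector, where concavity of the ground-state functional $V\mapsto E_0(V)$ combined with Jensen's inequality applied to the averaged single-site potential $\tilde q=(2n+1)^{-d}\sum_\gamma q(\cdot-\lambda\omega_\gamma)$ already yields a bound $\geq E(\lambda,\zeta(\lambda))$ via (H.1.1), while the couplings to the non-periodic Floquet sectors are dominated using the quadratic sign provided by (H.1.2).
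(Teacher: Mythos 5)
Your first-order computation at the constant configuration $\omega^*=(\zeta(\lambda))_\gamma$ is correct and matches the role that (H.1.2) and the Feynman--Hellmann identity~\eqref{eq:9} play in the paper, and you have correctly diagnosed the main obstruction (the $O(n^{-2})$ gap above the ground state of the periodized operator makes naive perturbation theory non-uniform in $n$). But the proof is not closed. A strictly positive directional derivative at $\omega^*$ together with the variational inequality $E_0^n(\lambda\omega)\geq E(\lambda,\zeta(\lambda))+\langle W_\omega u_\omega,u_\omega\rangle$ gives nothing global, because $W_\omega$ is not sign-definite and $u_\omega$ is the unknown perturbed ground state; everything hinges on the reduction you only gesture at. Moreover, the mechanism you propose to close the argument fails: concavity of $V\mapsto\inf\sigma(-\Delta+p+V)$ applied to the averaged single-site potential $\tilde q=(2n+1)^{-d}\sum_\gamma q(\cdot-\lambda\omega_\gamma)$ does yield $E_0(\tilde q\text{-periodic})\geq(2n+1)^{-d}\sum_\gamma E(\lambda,\omega_\gamma)\geq E(\lambda,\zeta(\lambda))$ by (H.1.1), but the comparison between $E_0^n(\lambda\omega)$ and $E_0(\tilde q\text{-periodic})$ runs the wrong way: averaging the potential over its $(2n+1)^d$ lattice translates and using the same concavity (equivalently, compressing $H_{\lambda,\omega,n}$ to the $\Z^d$-periodic subspace and invoking min--max) gives $E_0(\tilde q\text{-periodic})\geq E_0^n(\lambda\omega)$, i.e.\ an \emph{upper} bound on the quantity you must bound from below.

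What is actually needed --- and what constitutes the bulk of the paper's argument --- is the two-sided operator inequality of Theorem~\ref{thr:3}: after the Floquet--Feshbach decomposition along the ground-band projector $\Pi_{\lambda,\zeta(\lambda),0}$, one must show that the off-diagonal couplings in~\eqref{eq:29} are dominated by the diagonal blocks, so that $H_{\lambda,\omega,n}-E_\lambda\geq C_0^{-1}\left(P_{\lambda,\zeta(\lambda)}h^-P^*_{\lambda,\zeta(\lambda)}+\Pi_{\lambda,\zeta(\lambda),+}\right)$ with $h^-$ the explicit discrete model built from $\varpi$ and the single-site weights $v\cdot(\omega_\gamma-\zeta(\lambda))-C_0\alpha\|\omega_\gamma-\zeta(\lambda)\|^2$. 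Once this is in hand, (H.1.2) makes these weights non-negative, and uniqueness follows from an elementary spectral argument on the periodic discrete operator: $0\in\sigma(h^-)$ forces a null vector of $\varpi$, hence $\theta=0$ and $v=c\delta_0$, hence $\sum_\gamma\|\omega_\gamma-\zeta(\lambda)\|^2=0$, i.e.\ $\omega=\omega^*$. Neither this reduction nor a working substitute for it appears in your proposal, so the argument as written does not prove the theorem.
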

\noindent So, when it comes to finding the ``ground state'' of our
random system, for small $\lambda$, the Hamiltonian behaves as if it
were monotonous in the random variables $(\omega_\gamma)_\gamma$.\\
By the standard characterization of the almost sure spectrum in terms
of the spectra of the periodic approximations (see
e.g.~\cite{MR94h:47068}), for $\lambda$ sufficiently small, one has
that $E_\lambda=\inf\Sigma_\lambda=E(\lambda,\zeta(\lambda))$.
\subsubsection{The Lifshitz tails}
\label{sec:lifshitz-tails}
As a consequence of the determination of the minimum, we obtain
\begin{Th}
  \label{t2}
  Under assumptions (H.0) and (H.1), there exists $\lambda_0>0$ such
  that for all $\lambda\in(0,\lambda_0]$,
  \begin{equation*}
    \lim_{E \rightarrow E_{\lambda}} \frac{\log
      |\log(N_{\lambda}(E)-N_{\lambda}(E_{\lambda})|}
    {\log(E-E_{\lambda})}\leq-\frac{d}{2} 
  \end{equation*}
  Moreover, if the common distribution of the random variables
  $(\omega_\gamma)_\gamma$ is such that, for all $\lambda$,
  $\varepsilon$ and $\delta$ positive sufficiently small, one has
  \begin{equation*}
    \mathbb{P}(\{|\omega_0-\zeta(\lambda)|\leq\varepsilon\})
    \geq e^{-\varepsilon^{-\delta}},
  \end{equation*}
  then
  \begin{equation*}
    \lim_{E \rightarrow E_{\lambda}} \frac{\log
      |\log(N_{\lambda}(E)-N_{\lambda}(E_{\lambda})|}
    {\log(E-E_{\lambda})}=-\frac{d}{2} 
  \end{equation*}
\end{Th}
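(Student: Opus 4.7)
The plan is to carry out the classical Lifshitz tails argument for periodic plus random Schr\"odinger operators in the style of Kirsch--Martinelli, using Theorem~\ref{thr:1} as the backbone. The central technical input is the following quantitative strengthening of Theorem~\ref{thr:1}: for $\lambda>0$ small and $\omega\in K^{(2n+1)^d}$ in a neighborhood of the constant configuration $\overline{\zeta(\lambda)}$,
\[
E_0^n(\lambda\omega)-E(\lambda,\zeta(\lambda))\geq \frac{c\,\lambda}{(2n+1)^d}\sum_{\gamma\in\Z^d/(2n+1)\Z^d}|\omega_\gamma-\zeta(\lambda)|^2,
\]
with $c>0$ uniform in $n$ and $\lambda$. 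To obtain this I would treat $H^P_{\lambda,\omega,n}$ as a perturbation of the periodic operator $H_{\zeta(\lambda)}$ and apply second-order Rayleigh--Schr\"odinger perturbation theory (or Temple's inequality) at the simple Floquet eigenvalue $E(\lambda,\zeta(\lambda))$ sitting at $\theta=0$; by periodicity of the Floquet ground state, the first-order contribution is exactly the arithmetic mean $(2n+1)^{-d}\sum_\gamma(E(\lambda,\omega_\gamma)-E(\lambda,\zeta(\lambda)))$, to which (H.1.2) applies termwise.

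With this bound in hand, the upper half of the theorem follows the standard route. Starting from $N_\lambda(E)\leq (2n+1)^{-d}\,\mathbb E\bigl[\#\{\text{eigenvalues of }H^P_{\lambda,\omega,n}\text{ in }[E_\lambda,E]\}\bigr]$ together with Theorem~\ref{thr:1} (which forces $E_0^n(\lambda\omega)\geq E_\lambda$) and a Weyl-type counting bound near the quadratic band edge, I obtain
\[
N_\lambda(E)-N_\lambda(E_\lambda)\leq C(E-E_\lambda)^{d/2}\,\mathbb P\bigl(E_0^n(\lambda\omega)\leq E\bigr).
\]
The quadratic lower bound turns the event $\{E_0^n\leq E\}$ into the empirical-average constraint $(2n+1)^{-d}\sum_\gamma|\omega_\gamma-\zeta(\lambda)|^2\leq (E-E_\lambda)/(c\lambda)$, which, by nondegeneracy of the $\omega_\gamma$, has probability at most $e^{-c'(2n+1)^d}$ by a Chernoff bound as soon as $(E-E_\lambda)/\lambda$ is below a fixed threshold. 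Choosing $n\sim(E-E_\lambda)^{-1/2}$ yields $N_\lambda(E)-N_\lambda(E_\lambda)\leq C e^{-c(E-E_\lambda)^{-d/2}}$, from which the $\limsup\leq -d/2$ statement follows by taking the double logarithm.

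For the matching lower bound, I use Dirichlet bracketing: $N_\lambda(E)\geq (2n+1)^{-d}\,\mathbb P(E_0^D(H^D_{\lambda,\omega,\Lambda_n})\leq E)$. A trial function $\chi\,\phi_0$, with $\phi_0$ the normalized Floquet ground state of $H_{\zeta(\lambda)}$ and $\chi$ a smooth cut-off supported in $\Lambda_n$, gives the variational estimate $E_0^D(H^D_{\lambda,\omega,\Lambda_n})\leq E_\lambda+C n^{-2}+C\lambda\,\max_\gamma|\omega_\gamma-\zeta(\lambda)|$, the three terms being respectively the periodic bottom, the Dirichlet kinetic boundary cost, and the first-order sup-norm perturbation of the potential. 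Picking $n\sim(E-E_\lambda)^{-1/2}$ and demanding $\max_\gamma|\omega_\gamma-\zeta(\lambda)|\leq \varepsilon:=(E-E_\lambda)/(2C\lambda)$ forces this upper bound below $E$. The tail hypothesis then gives
\[
\mathbb P\bigl(\max_\gamma|\omega_\gamma-\zeta(\lambda)|\leq\varepsilon\bigr)\geq e^{-(2n+1)^d\,\varepsilon^{-\delta}}\geq e^{-C(E-E_\lambda)^{-d/2-\delta}}.
\]
Taking logarithms twice and dividing by $\log(E-E_\lambda)$ produces $\liminf\geq -d/2-\delta$; since $\delta>0$ is arbitrary, combined with the upper half the quotient converges to $-d/2$.

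The main obstacle will be the very first step: extending the bound (H.1.2), which is stated only for \emph{constant} configurations $\overline\zeta$, to general non-constant configurations $\omega$ on large boxes, with a constant uniform in both the box size $n$ and in $\lambda\to 0$. This requires a careful Floquet-perturbative analysis in which the second-order remainder is dominated by the first-order term uniformly in $n$; it rests on the spectral gap above $E(\lambda,\zeta(\lambda))$ in the periodic spectrum of $H_{\zeta(\lambda)}$ (which is stable in $\lambda$) together with explicit control of the Floquet ground eigenfunction and its $\lambda$-dependence as $\lambda\to 0$. Once this ingredient is secured, the remainder of the argument is essentially mechanical.
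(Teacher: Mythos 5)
Your route---a direct continuum Lifshitz-tails argument in the style of Kirsch--Simon/Kirsch--Martinelli---is genuinely different from the paper's, which instead proves a two-sided operator inequality (Theorem~\ref{thr:3}) sandwiching $H_{\lambda,\omega}-E_\lambda$ between discrete Anderson models $\varpi+\lambda\sum_\gamma\omega^\pm_\gamma\Pi_\gamma$ with i.i.d.\ non-negative couplings $\omega^\pm_\gamma\asymp|\omega_\gamma-\zeta(\lambda)|^2$ (non-negativity coming from (H.1.2)), and then simply quotes the known Lifshitz asymptotics for the discrete Anderson model. The difference of route would be fine, but your version has a genuine gap: the ``central technical input'' is false as stated. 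The ground state energy $E_0^n(\lambda\omega)$ is \emph{not} bounded below by $E_\lambda$ plus a constant times the spatial average of $\lambda|\omega_\gamma-\zeta(\lambda)|^2$, uniformly in $n$. Take $\omega_\gamma=\zeta(\lambda)$ for $\gamma$ in a sub-cube of side $m$ and $|\omega_\gamma-\zeta(\lambda)|=\delta_0>0$ elsewhere: a cut-off of the Floquet ground state of $H_{\zeta(\lambda)}$ supported well inside the sub-cube gives $E_0^n(\lambda\omega)-E_\lambda\leq Cm^{-2}$, while the right-hand side of your inequality stays $\geq c\lambda\delta_0^2/2$ for $m\leq n$. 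The bound therefore fails whenever $m\gg(\lambda\delta_0^2)^{-1/2}$ and $n\geq m$, which is exactly the regime $n\sim(E-E_\lambda)^{-1/2}\to\infty$ your Chernoff step requires; restricting to a sup-norm neighbourhood of $\overline{\zeta(\lambda)}$ does not help (rescale $\delta_0$). The proposed derivation cannot repair this: Temple's inequality (or second-order perturbation theory) applied globally on $\Lambda_n$ has in its denominator the gap of $H^P_{\lambda,\omega,n}$ above its ground state, which is only $O(n^{-2})$ because the band $\theta\mapsto E_0(\lambda,\zeta(\lambda),\theta)$ is quadratic at $\theta=0$ and is sampled at $\theta\in(2n+1)^{-1}\T^*$; the remainder then swamps the first-order term for large $n$.

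What is true, and what any correct proof must use, is the truncated bound $E_0^n(\lambda\omega)-E_\lambda\geq c(2n+1)^{-d}\sum_\gamma\min\bigl(\lambda|\omega_\gamma-\zeta(\lambda)|^2,\,n^{-2}\bigr)$, obtained e.g.\ by bracketing into sub-cells on which the spectral gap is of order one and applying Temple there, or, as in the paper, by retaining the discrete kinetic term $\varpi(\theta)=\sum_j(1-\cos\theta_j)$ in the comparison operator---that term is precisely what encodes the missing $n^{-2}$. With this replacement your probabilistic step survives essentially unchanged: $E_0^n\leq E$ then forces a positive fraction of the $\omega_\gamma$ into a small neighbourhood of $\zeta(\lambda)$, and independence gives the $e^{-c(2n+1)^d}$ estimate, hence the upper bound $-d/2$. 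Your lower bound via Dirichlet bracketing, the trial function $\chi\varphi_0$ and the hypothesis $\mathbb{P}(|\omega_0-\zeta(\lambda)|\leq\varepsilon)\geq e^{-\varepsilon^{-\delta}}$ is correct and matches the standard argument. So the architecture is reasonable, but the key lemma must be corrected, and its proof (uniform in $n$ and $\lambda$) is the substantive part of the theorem rather than a perturbative preliminary; as written the proposal does not establish the statement.
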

\noindent The Lifshitz tail behavior is well known for monotonous
alloy type models. It has also been discovered recently that, for
general displacement or non monotonous alloy type models, this
behavior need not hold (see~\cite{BaLoSto:08,FN:08a,FN:09}).
\subsubsection{The Wegner estimate}
\label{sec:wegner-estimate-1}
A Wegner estimate is an estimate on the probability that a restriction
of the random Hamiltonian to a cube admits an eigenvalue in a fixed
energy interval. Clearly, the estimate should grow with the size of
the cube and decrease with the length of the interval in which one
looks for eigenvalues.\\
The restrictions we choose are the periodic ones i.e those defined at
the beginning of section~\ref{sec:results}. We assume that
\begin{description}
\item[(H.2)] There exists $C>0$ such that, for $\lambda$ sufficiently
  small, one has $E_\lambda\leq E_0-\lambda/C$.
\end{description}
Clearly, Theorem~\ref{thr:1} shows that this assumption is a
consequence of assumptions (H.0) and (H.1).\\
For the alloy type models, it is well known that a Wegner estimate
will hold only under a regularity assumption. We now turn to the
corresponding assumption for our displacement model. We keep the
notations of section~\ref{sec:infimum-spectrum}. Consider the polar
decomposition of the random variable $\omega_0$, say
$\omega_0=r(\omega_0)\sigma(\omega_0)$. For $\sigma\in{\mathbb
  S}^{d-1}$, define $r_\sigma(\omega_0)$, the random variable
$r(\omega_0)$ conditioned on $\sigma(\omega_0)=\sigma$.\\
We assume that
\begin{description}
\item[(H.3)] for almost all $\sigma\in{\mathbb S}^{d-1}$, the
  distribution of $r_\sigma(\omega_0)$ admits a density with respect
  to the Lebesgue measure, say, $h_\sigma$ that itself is absolutely
  continuous with respect to the Lebesgue measure; moreover, one has
  \begin{equation}
    \label{eq:23}
    \text{ess-sup}_{\sigma\in{\mathbb S}^{d-1}}\|h_\sigma'\|_\infty<+\infty.
  \end{equation}
\end{description}
\begin{Rem}
  \label{rem:1}
  Assumption (H.3) will hold for example if
  \begin{itemize}
  \item the random variable admit a density that is continuously
    differentiable on its support;
  \item the random variable is supported on a submanifold of dimension
    $1\leq d'\leq d$, and on this submanifold, it admits continuously
    differentiable density.
  \end{itemize}
\end{Rem}
We prove
\begin{Th}\label{t3}
  Under assumptions (H.0), (H.2) and (H.3), for any $\nu\in(0,1)$,
  there exists $\lambda_0>0$ such that, for $\lambda\in(0,\lambda_0]$,
  there exists $C_\lambda>0$ such that, for all $E\in[E_{\lambda},
  E_{\lambda}+\lambda/C]$ and $\varepsilon >0$ such that
  \begin{equation}
    \label{eq:18}
    \mathbb{P}(\text{dist}(\sigma(H_{\lambda,\omega,n}^P),E)\leq\varepsilon)
    \leq C_\lambda\varepsilon^{\nu}n^d .
  \end{equation}
\end{Th}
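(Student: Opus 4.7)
The plan is to combine a Chebyshev reduction with a spectral averaging along the radial parts of the $\omega_\gamma$, using (H.1.2) to recover a quantitative effective monotonicity and (H.3) to perform the averaging. First I reduce the probability to an expected trace:
\begin{equation*}
\mathbb{P}(\dist(\sigma(H_{\lambda,\omega,n}^P),E)\leq\varepsilon)\leq\mathbb{E}\bigl[\tr\chi_{[E-\varepsilon,E+\varepsilon]}(H_{\lambda,\omega,n}^P)\bigr],
\end{equation*}
so it suffices to bound the right-hand side by $C_\lambda\varepsilon^\nu n^d$. Since the window $[E_\lambda,E_\lambda+\lambda/C]$ lies, by (H.2), strictly below the unperturbed periodic spectrum, any eigenfunction $\psi_k$ of $H_{\lambda,\omega,n}^P$ with eigenvalue in this window sits in a deep Lifshitz regime and, via a Combes--Thomas/Agmon decay argument relative to $H_{\zeta(\lambda)}$, can be localized in the regions where $\omega$ is close to $\overline{\zeta(\lambda)}$.

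The main input is then an effective monotonicity for low eigenvalues in a collective radial variable. By Feynman--Hellmann and the chain rule, at the constant configuration $\omega=\overline{\zeta(\lambda)}$,
\begin{equation*}
\sum_{\gamma}\nabla_{\omega_\gamma}E_0^n\bigl(\lambda\overline{\zeta(\lambda)}\bigr)=\nabla_\zeta E(\lambda,\zeta(\lambda)),
\end{equation*}
so dotting with $(\zeta-\zeta(\lambda))$ and applying (H.1.2) yields the quadratic lower bound
\begin{equation*}
\sum_\gamma(\zeta-\zeta(\lambda))\cdot\nabla_{\omega_\gamma}E_0^n(\lambda\overline{\zeta(\lambda)})\geq\alpha_0\lambda\,|\zeta-\zeta(\lambda)|^2.
\end{equation*}
Using Floquet--Bloch perturbation theory relative to $H_{\zeta(\lambda)}$, which has a spectral gap above its ground band, each low-lying $\psi_k$ is well approximated by a ground-state Bloch modulation whenever $\omega$ is close to $\overline{\zeta(\lambda)}$, and the identity above extends (with controlled multiplicative error) to
\begin{equation*}
\sum_\gamma(\omega_\gamma-\zeta(\lambda))\cdot\nabla_{\omega_\gamma}E_k(\omega)\geq\tfrac12\alpha_0\lambda\,|\omega-\overline{\zeta(\lambda)}|^2
\end{equation*}
uniformly for $\omega$ in such a neighborhood and for all low-lying eigenvalues $E_k(\omega)$.

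I then run the spectral averaging. Conditioning on the angular parts $(\sigma_\gamma)_\gamma$ in the polar decomposition from (H.3), each radial variable $r_\gamma$ has density $h_{\sigma_\gamma}\in W^{1,\infty}$ with a uniform bound. A Stollmann-type partition estimate on the level set $\{\omega:\dist(\sigma(H_{\lambda,\omega,n}^P),E)\leq\varepsilon\}$, combined with the effective monotonicity of the preceding step in the collective radial direction pointing away from $\overline{\zeta(\lambda)}$, produces a bound proportional to $\varepsilon^\nu n^d$. The reason the exponent is $\nu<1$, rather than the $\nu=1$ that a Lipschitz density would naively suggest, is that the monotonicity is quadratic, not linear, in $|\omega-\overline{\zeta(\lambda)}|$: one must separately treat the ``small neighborhood'' configurations, estimated by volume, and the ``large neighborhood'' configurations, estimated by the quasi-monotonous spectral averaging, and optimizing the trade-off between the two leaves any exponent strictly below $1$.

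The principal difficulty is the effective monotonicity step: quantifying, uniformly in $\omega$ in a full neighborhood of $\overline{\zeta(\lambda)}$ and uniformly in all low-lying eigenvalues $E_k$, the closeness of $\psi_k$ to a Bloch ground state of $H_{\zeta(\lambda)}$. This calls for a careful resolvent expansion relative to $H_{\zeta(\lambda)}^P$ which exploits the spectral gap above the ground band and which controls the growth of the remainder terms with $|\omega-\overline{\zeta(\lambda)}|$, especially at sites $\gamma$ where $\omega_\gamma$ is far from $\zeta(\lambda)$; this is where I expect most of the technical work to reside.
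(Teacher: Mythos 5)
Your opening reduction (Chebyshev to an expected trace, then conditioning on the angular parts of the $\omega_\gamma$ via the polar decomposition of (H.3)) matches the paper's first step, but from there the argument diverges and contains a genuine gap. The step you yourself flag as the ``principal difficulty'' is in fact the whole theorem: the uniform effective monotonicity
$\sum_\gamma(\omega_\gamma-\zeta(\lambda))\cdot\nabla_{\omega_\gamma}E_k(\omega)\geq\tfrac12\alpha_0\lambda\,|\omega-\overline{\zeta(\lambda)}|^2$
for \emph{all} low-lying eigenvalues and \emph{all} configurations in a neighborhood of $\overline{\zeta(\lambda)}$ is asserted, not proved, and it is doubtful it can be used as stated. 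The direction $(\omega_\gamma-\zeta(\lambda))_\gamma$ along which you propose to average depends on $\omega$, so a Stollmann-type partition lemma (which requires monotonicity along a fixed, configuration-independent direction in the cube of random parameters) does not apply directly; and the right-hand side degenerates quadratically precisely at the configuration $\overline{\zeta(\lambda)}$ that generates the energies in the window $[E_\lambda,E_\lambda+\lambda/C]$, so the ``volume versus averaging'' trade-off you invoke to explain $\nu<1$ is not substantiated (the measure of the small-neighborhood set is not related to $\varepsilon$ without a two-sided eigenvalue bound you do not have). Note also that Theorem~\ref{t3} is stated under (H.0), (H.2) and (H.3) only: your argument leans on (H.1.2), which is not among its hypotheses.

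The paper's proof avoids monotonicity of the ground-state energy altogether. After conditioning on $\sigma=(\sigma_\gamma)_\gamma$, one Taylor-expands the displaced single-site potentials in $\lambda$ as in~\eqref{eq:21}: the operator becomes an alloy-type model in the radial couplings $r_\gamma$ with single-site potentials $v_{\sigma_\gamma}=-\sigma_\gamma\cdot\nabla q$ (non-sign-definite, $C^1$, compactly supported, not identically zero) plus a remainder $\lambda^2V_{2,\omega,\lambda}$ that is bounded uniformly in $\lambda$ and $\omega$. Assumption (H.2) places the whole energy window below $\inf\sigma(H_0)$, which is exactly the regime in which Lemma 6.1 of Hislop--Klopp \cite{MR1934351} yields a Wegner estimate for non-sign-definite alloy models, with constant proportional to $\sup_\gamma\|h_{\sigma_\gamma}'\|_\infty$; the bound~\eqref{eq:23} makes this constant essentially bounded in $\sigma$, and integrating over $\sigma$ gives~\eqref{eq:18}. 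The exponent $\nu<1$ is an artifact of that method for non-sign-definite potentials, not of a quadratic-versus-linear monotonicity trade-off. To repair your proposal you would either have to prove the uniform effective monotonicity lemma along a fixed direction (which is essentially a new result), or switch to the linearization-plus-\cite{MR1934351} route.
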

\noindent The result is essentially a quite simple consequence of
Theorem 6.1 of~\cite{MR1934351}; the modifications are indicated in
section~\ref{sec:wegner-estimate}.\\
In the case of monotonous random operators, under our smoothness
assumptions for the distribution of the random variables, the
estimate~(\ref{eq:18}) can be improved in the sense that the power
$\nu$ can be taken equal to 1 (see~\cite{MR2362242}). It seems
reasonable to think that the same holds true for most non monotonous
models; to our knowledge, no proof of this fact exists.\\
A Wegner estimate of the type~(\ref{eq:18}) implies a minimal
regularity for $N_\lambda$, the integrated density of states of
$H_{\lambda,\omega}$ in the low energy region. Indeed, one proves
\begin{Cor}
  \label{cor:1}
  Under the assumptions of Theorem~\ref{t3}, for any $\nu\in(0,1)$,
  the integrated density of states $N_\lambda$ is $\nu$-H{\"o}lder
  continuous is the region $[E_{\lambda},E_{\lambda}+\lambda/C]$
  defined in Theorem~\ref{t3}.
\end{Cor}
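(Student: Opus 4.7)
The plan is to derive the Hölder continuity from the finite-volume estimate underlying Theorem~\ref{t3}, then pass to the infinite-volume limit. Under our ergodicity hypotheses the IDS admits the standard representation
\[
N_\lambda(E) = \lim_{n\to\infty} \frac{1}{(2n+1)^d}\, \mathbb{E}\bigl[\tr\mathbf{1}_{(-\infty,E]}(H_{\lambda,\omega,n}^P)\bigr]
\]
(see~\cite{MR94h:47068}). Consequently, for $E<E'$ both lying in $[E_\lambda, E_\lambda + \lambda/C]$,
\[
N_\lambda(E') - N_\lambda(E) = \lim_{n\to\infty} \frac{1}{(2n+1)^d}\, \mathbb{E}\bigl[\tr\mathbf{1}_{(E,E']}(H_{\lambda,\omega,n}^P)\bigr],
\]
so it suffices to bound the expected number of eigenvalues of the periodic restriction in a small interval.

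The key observation is that the spectral-averaging argument of~\cite{MR1934351}, Theorem~6.1, as adapted in section~\ref{sec:wegner-estimate} to prove Theorem~\ref{t3}, actually produces the stronger trace-form Wegner estimate
\[
\mathbb{E}\bigl[\tr\mathbf{1}_{[E-\varepsilon,E+\varepsilon]}(H_{\lambda,\omega,n}^P)\bigr] \leq C_\lambda\,\varepsilon^{\nu}\,n^d,
\]
uniform in $n\geq 1$ and in $E\in[E_\lambda, E_\lambda + \lambda/C]$. The probability bound \eqref{eq:18} is an immediate consequence of this via Markov's inequality applied to the nonnegative integer-valued random variable $\tr\mathbf{1}_{[E-\varepsilon,E+\varepsilon]}(H_{\lambda,\omega,n}^P)$; to prove the corollary I will simply invoke the finer, trace-form conclusion already delivered by that proof.

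Setting $\varepsilon:=(E'-E)/2$ and $\tilde E:=(E+E')/2$, one has $(E,E']\subset[\tilde E-\varepsilon,\tilde E+\varepsilon]$, and the trace-form estimate at $\tilde E$ gives
\[
\frac{1}{(2n+1)^d}\, \mathbb{E}\bigl[\tr\mathbf{1}_{(E,E']}(H_{\lambda,\omega,n}^P)\bigr] \leq C_\lambda\left(\frac{E'-E}{2}\right)^{\nu}\cdot\frac{n^d}{(2n+1)^d}.
\]
Since the right-hand side is uniformly bounded in $n$, passing to the limit $n\to\infty$ yields $N_\lambda(E')-N_\lambda(E) \leq C'_\lambda\,|E'-E|^\nu$, which is precisely $\nu$-Hölder continuity on $[E_\lambda, E_\lambda + \lambda/C]$. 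The only non-mechanical step is the first one: verifying that the proof of Theorem~\ref{t3} delivers the trace-form estimate rather than merely its probabilistic consequence. In the two-parameter spectral averaging of~\cite{MR1934351}, this corresponds to retaining the full eigenvalue counting function of the averaged operator at each of the finitely many sites over whose random parameter one integrates; once this bookkeeping is carried out, the remainder of the argument is routine.
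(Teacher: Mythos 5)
Your proof is correct and follows exactly the route the paper intends (the paper omits the argument, merely asserting that a Wegner estimate of type~\eqref{eq:18} implies the regularity): the spectral averaging of~\cite{MR1934351} does produce the trace-form bound $\mathbb{E}\bigl[\tr\mathbf{1}_{[E-\varepsilon,E+\varepsilon]}(H_{\lambda,\omega,n}^P)\bigr]\leq C_\lambda\varepsilon^\nu n^d$, linear in the volume, and \eqref{eq:18} is only its Chebyshev consequence, so your insistence on using the finer estimate is exactly the right point. The only cosmetic caveat is that the limit representation of $N_\lambda(E')-N_\lambda(E)$ should first be invoked at continuity points of $N_\lambda$ and then extended to all of $[E_\lambda,E_\lambda+\lambda/C]$ by monotonicity and density, which is routine.
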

\subsubsection{Localization}
\label{sec:localization}
Once Theorems \ref{t2} and \ref{t3} are proved, localization follows
by the now standard multiscale argument (see
e.g.~\cite{MR2002m:82035})
\begin{Th}
  \label{t1}
  Under assumptions (H.0), (H.1) and (H.3), there exists $\lambda_0>0$
  such that, for $\lambda \in(0, \lambda_0]$, Anderson and strong
  dynamical localization near the bottom of the spectrum. Namely,
  there exist $E_{\lambda,1}>E_{\lambda}$ such that
  $H_{\lambda,\omega}$ has dense pure point spectrum on $[E_{\lambda},
  E_{\lambda,1}]$ almost surely, and each eigenfunction associated to
  an energy in this interval decays exponentially as $|x| \rightarrow
  \infty$, and strong dynamical localization holds in the same region.
\end{Th}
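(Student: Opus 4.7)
The plan is to deduce Theorem~\ref{t1} from Theorems~\ref{t2} and \ref{t3} via the standard continuum multiscale analysis (MSA). First I would check that both inputs are genuinely available under the hypotheses of Theorem~\ref{t1}: assumptions (H.0) and (H.1) together with Theorem~\ref{thr:1} imply (H.2) (with some constant $C>0$), so (H.0), (H.2), (H.3) hold and Theorem~\ref{t3} delivers the Wegner estimate~\eqref{eq:18} with any exponent $\nu\in(0,1)$; simultaneously Theorem~\ref{t2} yields the Lifshitz tail upper bound $N_\lambda(E)-N_\lambda(E_\lambda)\leq\exp(-c(E-E_\lambda)^{-d/2})$ as $E\downarrow E_\lambda$. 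These are exactly the two ingredients fed into the MSA machine.

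Second, I would convert the Lifshitz tail bound into an initial length scale estimate (ILSE). Following the standard Simon--Wegner--Klopp argument (see the presentation in~\cite{MR2002m:82035}), the Lifshitz asymptotics implies that for a suitable $E_{\lambda,1}>E_\lambda$ close to $E_\lambda$ and for scales $L$ large enough, the probability that the periodic box Hamiltonian $H^P_{\lambda,\omega,L}$ has an eigenvalue in $[E_\lambda,E_{\lambda,1}]$ is super-polynomially small in $L$; combining this with a Combes--Thomas estimate on the resolvent in the spectral gap gives exponential decay of $\|\chi_x(H_{\lambda,\omega,L}-E)^{-1}\chi_y\|$ with large probability for $E\in[E_\lambda,E_{\lambda,1}]$, which is precisely the starting hypothesis of the induction.

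Third, I would run the bootstrap MSA of Germinet--Klein (or equivalently the continuum MSA of Combes--Hislop, Figotin--Klein, Kirsch--Stollmann--Stolz, as outlined in~\cite{MR2002m:82035}). The induction requires only two probabilistic inputs: a Wegner estimate polynomial in both $\varepsilon$ and the volume, and independence at a distance of the finite-volume Hamiltonians. The former is supplied by~\eqref{eq:18} (the exponent $\nu<1$ is irrelevant, any $\nu>0$ suffices for MSA); the latter holds because $q$ is compactly supported (H.0.1), $\omega_\gamma$ is bounded (H.0.2) and $\lambda$ is small, so Hamiltonians restricted to cubes separated by more than $2\mathrm{diam}(\supp q)+2\lambda\sup|K|$ depend on disjoint finite subfamilies of $(\omega_\gamma)_\gamma$ and are therefore independent. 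The output of the bootstrap MSA is exponential decay of finite-volume Green's functions on all scales in $[E_\lambda,E_{\lambda,1}]$, which by the Germinet--Klein criterion yields Anderson localization (dense pure point spectrum with exponentially decaying eigenfunctions) and strong HS-dynamical localization in that interval.

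The main conceptual obstacle has already been absorbed into Theorems~\ref{t2} and~\ref{t3}: the non-monotonic dependence of $H_{\lambda,\omega}$ on $(\omega_\gamma)_\gamma$ obstructs the usual spectral-averaging proof of Wegner and the usual Temple-inequality proof of Lifshitz tails, which is why both theorems require (H.1) and the substantial work done earlier in the paper. The MSA step itself is routine once these two estimates are in hand, the only minor verification being that the setup matches the hypotheses of the abstract localization theorem one cites (e.g.\ measurability of the random operator, existence of a generalized eigenfunction expansion, etc., all standard for Schrödinger operators with bounded random potentials).
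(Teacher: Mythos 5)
Your proposal is correct and follows exactly the route the paper takes: it deduces Theorem~\ref{t1} from the Lifshitz tail estimate (Theorem~\ref{t2}) and the Wegner estimate (Theorem~\ref{t3}) via the standard bootstrap multiscale analysis of~\cite{MR2002m:82035}, noting that (H.2) follows from (H.0), (H.1) and Theorem~\ref{thr:1}, and that the Combes--Thomas estimate and independence at a distance work for the displacement model as for alloy-type models. The paper itself omits the details of this step, so your write-up simply makes explicit what the authors leave to the reader.
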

\noindent We omit the details of the proofs of this result. We only
note that the Combes-Thomas estimate and the decomposition of
resolvents in the multiscale argument work for the random displacement
model in the same way as for alloy type models.
\subsection{The validity of assumption (H.1)}
\label{sec:valid-assumpt-h.1}
Let us now describe some concrete conditions on $q$ and $K$ that
ensure that assumption (H.1) does hold. Let $H_0=H_{\lambda,0}$ be
defined by~\eqref{eq:6} for $\zeta=0$. The spectrum of this operator
is purely absolutely continuous; it is a union of intervals (see
e.g.~\cite{MR58:12429c}). Let $E_0$ be the infimum of this spectrum
and $\varphi_0$ be the solution to the following spectral problem
\begin{equation}
  \label{eq:12}
  \begin{cases}
    H_0\varphi_0=E_0\varphi_0,\\\forall\gamma\in\Z^d,\
    \varphi_0(x+\gamma)=\varphi_0(x).
  \end{cases}
\end{equation}
This solution is unique up to a constant; it can be chosen positive
and normalized (see~\cite{MR89b:35127,MR80m:81085}). We will then call
it the ground state for $H_0$.\\
Recall that $K$ is the essential support of the random
variables $(\omega_\gamma)_\gamma$; thus $K\subset\R^d$. \\
We prove
\begin{Pro}
  \label{pro:1}
  Assume that $K$ is
  \begin{itemize}
  \item either a convex set with $C^2$-boundary such that all its
    principal curvatures are positive at all points,
  \item or the boundary of such a convex set,
  \end{itemize}
  and that
  \begin{equation}
    \label{eq:11}
    v(q):=-\int_{\R^d}\nabla q(x)|\varphi_0(x)|^2dx\not=0,
  \end{equation}
  Then, assumption (H.1) holds.
\end{Pro}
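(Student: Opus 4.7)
\emph{Proof plan.} The central observation is that $H_\zeta = -\Delta + p + Q(\cdot - \lambda\zeta)$, where $Q := \sum_{\gamma \in \Z^d} q(\cdot - \gamma)$ denotes the periodization of $q$, reduces at $\lambda = 0$ to the $\zeta$-independent operator $H_0 = -\Delta + p + Q$. Thus the entire $\zeta$-dependence of $E(\lambda,\zeta)$ is generated perturbatively by $\lambda$. Since $E_0$ is a simple Floquet eigenvalue, analytic perturbation theory yields $C^2$ joint smoothness of $E(\lambda,\zeta)$ on $[0,\lambda_0] \times K$; combined with $E(0,\zeta) \equiv E_0$, a first-order Taylor expansion in $\lambda$ gives
\[
E(\lambda,\zeta) = E_0 + \lambda \langle \varphi_0, -\zeta \cdot \nabla Q\, \varphi_0 \rangle + \lambda^2 r(\lambda, \zeta).
\]
Using the $\Z^d$-periodicity of $|\varphi_0|^2$ and unfolding the periodized sum against $|\varphi_0|^2$ into an integral over $\R^d$, the first-order coefficient evaluates to $\zeta \cdot v(q)$, so that
\[
E(\lambda,\zeta) = E_0 + \lambda\, \zeta \cdot v(q) + \lambda^2 r(\lambda, \zeta), \qquad \nabla_\zeta E(\lambda, \zeta) = \lambda\, v(q) + O(\lambda^2),
\]
uniformly in $\zeta \in K$, with $r$ and $\nabla_\zeta r$ bounded on $[0,\lambda_0] \times K$.

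Next I would analyze the linear leading functional $\zeta \mapsto \zeta \cdot v(q)$ on $K$. Since $v(q) \neq 0$ and $K$ is either a strictly convex $C^2$ body or the boundary of one, this functional attains its minimum at a unique point $\zeta_0 \in K$, where the outward unit normal of the underlying convex set is $-v(q)/|v(q)|$. The uniform positivity of the principal curvatures, read off from a local graph parametrization with positive-definite Hessian, produces the quadratic lower bound
\[
v(q) \cdot (\zeta - \zeta_0) \geq c_0\, |\zeta - \zeta_0|^2, \qquad \zeta \in K,
\]
obtained locally from the curvature and extended globally by compactness and the uniqueness of $\zeta_0$. Since $\lambda^{-1}(E(\lambda, \cdot) - E_0)$ converges to $\zeta \cdot v(q)$ in $C^1(K)$ as $\lambda \to 0^+$, this quadratic separation persists under the perturbation, and an implicit-function-theorem argument on the smooth submanifold $\partial K$ (resp.\ $K = \partial C$) yields, for $\lambda$ small, a unique minimizer $\zeta(\lambda)$ of $E(\lambda, \cdot)$ on $K$ with $|\zeta(\lambda) - \zeta_0| = O(\lambda)$. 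This is (H.1.1).

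For (H.1.2), I would combine first-order optimality with the curvature bound at the base point $\zeta(\lambda)$. KKT in the convex-body case, and Lagrange multipliers in the boundary-manifold case, force $g := \nabla_\zeta E(\lambda, \zeta(\lambda)) = \mu\, \hat{n}_{\mathrm{in}}(\zeta(\lambda))$, where $\hat{n}_{\mathrm{in}}$ denotes the inward unit normal to the underlying convex set; using $g = \lambda v(q) + O(\lambda^2)$ together with $v(q) \cdot \hat{n}_{\mathrm{in}}(\zeta_0) = |v(q)|$ and the stability of $\hat{n}_{\mathrm{in}}$ under the $O(\lambda)$ displacement of the base point yields $\mu \geq \lambda |v(q)|/2$ for $\lambda$ small. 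Applying the same curvature-based quadratic estimate as in the previous paragraph, now anchored at $\zeta(\lambda)$, gives
\[
\hat{n}_{\mathrm{in}}(\zeta(\lambda)) \cdot (\zeta - \zeta(\lambda)) \geq \kappa\, |\zeta - \zeta(\lambda)|^2, \qquad \zeta \in K,
\]
with $\kappa > 0$ independent of small $\lambda$. Multiplying the two lower bounds produces (H.1.2) with $\alpha_0 = |v(q)| \kappa / 2$. The main obstacle is the case $K = \partial C$: because $\zeta - \zeta(\lambda)$ is not constrained to a half-space, the one-sided variational inequality is unavailable, so the sign of $\mu$ has to be pinned down via the perturbative formula for $g$ rather than from a KKT multiplier; patching the local (curvature-based) with the global (compactness/uniqueness) quadratic bounds uniformly in $\lambda$, as $\zeta(\lambda)$ drifts along $K$, is where the positivity of all principal curvatures (not merely strict convexity) is genuinely used.
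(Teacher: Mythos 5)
Your proposal is correct and follows essentially the same route as the paper: a Feynman--Hellmann/first-order expansion showing $\lambda^{-1}\nabla_\zeta E(\lambda,\cdot)\to v(q)$ uniformly, uniqueness of the minimizer from strict convexity, and the quadratic lower bound (H.1.2) from the uniform positivity of the principal curvatures at the tangency point. The only differences are implementational --- you stabilize the minimizer via $C^1$-convergence and the implicit function theorem where the paper straightens the gradient field with the rectification theorem, and you phrase the quadratic estimate through a Lagrange/KKT multiplier against the inward normal where the paper decomposes $|\zeta-\zeta(\lambda)|^2$ into components normal and tangential to the supporting hyperplane --- but these are equivalent in substance.
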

\noindent For a fixed periodic potential $p$ that is not constant, by
perturbation theory, it is not difficult to see that
condition~\eqref{eq:11} is satisfied for a generic small $q$. Indeed,
if $\psi_0$ is the ground state for $-\Delta+p$ (in the sense defined
above), as $\psi_0$ is positive, its modulus is constant if and only
if it is constant. In which case, the eigenvalue
equation~\eqref{eq:12} tells us that $p$ is constant, identically
equal to $E_0$. So we may assume that $\psi_0$ is not constant, one
can then find $q$ smooth and compactly supported such
that~\eqref{eq:11} holds. Indeed, by integration by parts,
\begin{equation*}
  w(q):=\int_{\R^d}\partial_i
  q(x)\psi^2_0(x)dx=2\int_{\R^d}q(x)\psi_0(x)\partial_i
  \psi_0(x)dx 
\end{equation*}
which vanishes for all smooth compactly supported functions if and
only if $\partial_i\psi_0$ vanishes identically. Hence, $w(q)$
vanishes for all $q$ small, smooth and compactly supported if and only
if $\psi_0$ is a constant (as $q\mapsto w(q)$ is linear).\\
As $\varphi_0$ is the ground state for the operator
$-\Delta+p+\sum_{\gamma}q(\cdot-\gamma)$ and this ground state is a
real analytic function of the potential $q$, the difference
$\psi_0-\varphi_0$ is small for $q$ small. So, if we pick $q_0$ such
that $w(q_0)\not=0$, for $\varepsilon$ small and $q=\varepsilon q_0$,
we know that $v(q)$ does not vanish i.e.~\eqref{eq:11} is satisfied.
\par By Proposition~\ref{pro:1} and Remark~\ref{rem:1}, it is clear
now that Theorem~\ref{thr:2} is a consequence of Theorem~\ref{t1}.
\par Let us now give another assumption on $K$ under which (H.1)
holds. We prove
\begin{Pro}
  \label{pro:2}
  Assume that~\eqref{eq:11} is satisfied and that the set $K$
  satisfies that, there exists $\varepsilon>0$ and $\zeta_0\in K$,
  such that, for all $\zeta\in K$ and $|v-v(q)|<\varepsilon$, one has
  \begin{equation*}
    v\cdot(\zeta-\zeta_0)\geq0.
  \end{equation*}
  Then, assumption (H.1) holds. Moreover, for $\lambda$ small, the
  minimum $\zeta(\lambda)$ satisfies $\zeta(\lambda)=\zeta_0$.
\end{Pro}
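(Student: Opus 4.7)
The strategy is a first-order Taylor expansion of $\zeta\mapsto E(\lambda,\zeta)$ about $\zeta_0$, uniformly in small $\lambda$, combined with the conical hypothesis on $K$ to upgrade the resulting linear lower bound into the quadratic one required by (H.1.2). The key structural observation is that $H_\zeta$ from~\eqref{eq:6} depends on $(\lambda,\zeta)$ only through the single shift vector $\eta:=\lambda\zeta$, so one can write $E(\lambda,\zeta)=\tilde E(\lambda\zeta)$, where $\tilde E$ is $C^2$ on a neighborhood of $0$ (this is exactly the twice continuous differentiability stated right after~\eqref{eq:6}, which itself rests on the simplicity and isolation of the ground state Floquet eigenvalue of $H_0$ at quasi-momentum $\theta=0$). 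Applying the Hellmann--Feynman formula to $H_\eta=-\Delta+p+\sum_\gamma q(\cdot-\gamma-\eta)$ at $\eta=0$, and unfolding the sum over $\gamma$ using the $\Z^d$-periodicity of $|\varphi_0|^2$, one identifies $\nabla\tilde E(0)=v(q)$. Since $K$ is bounded, of diameter $D$ say, a Taylor expansion of $\tilde E$ on the small set $\lambda K$ then yields, uniformly in $\zeta\in K$,
\begin{equation*}
E(\lambda,\zeta)=E_0+\lambda\,v(q)\cdot\zeta+O(\lambda^2),\qquad \nabla_\zeta E(\lambda,\zeta)=\lambda\,v(q)+O(\lambda^2).
\end{equation*}

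\medpagebreak
The conical hypothesis on $K$ is then exploited through the elementary remark that, for any $\zeta\in K\setminus\{\zeta_0\}$, the test vector $v:=v(q)-(\varepsilon/2)(\zeta-\zeta_0)/|\zeta-\zeta_0|$ lies in the open ball $\{|v-v(q)|<\varepsilon\}$, so the assumption $v\cdot(\zeta-\zeta_0)\geq 0$ rearranges to
\begin{equation*}
v(q)\cdot(\zeta-\zeta_0)\geq(\varepsilon/2)\,|\zeta-\zeta_0|,\qquad \forall\,\zeta\in K.
\end{equation*}
Inserting this into the two expansions above, for $\lambda$ small enough one obtains
\begin{equation*}
E(\lambda,\zeta)-E(\lambda,\zeta_0)\geq\tfrac14\,\lambda\varepsilon\,|\zeta-\zeta_0|>0\text{ for }\zeta\in K\setminus\{\zeta_0\},
\end{equation*}
which proves (H.1.1) with $\zeta(\lambda)=\zeta_0$, together with
\begin{equation*}
\nabla_\zeta E(\lambda,\zeta_0)\cdot(\zeta-\zeta_0)\geq\tfrac14\,\lambda\varepsilon\,|\zeta-\zeta_0|\geq\frac{\varepsilon}{4D}\,\lambda\,|\zeta-\zeta_0|^2,
\end{equation*}
which is (H.1.2) with $\alpha_0=\varepsilon/(4D)$.

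\medpagebreak
The only technically delicate point is the $C^2$ Taylor expansion of $\tilde E$ with a remainder uniform on $\lambda K$; this is granted by the analytic perturbation theory of the isolated simple periodic ground state, which is exactly the input already used to define $E(\lambda,\zeta)$ smoothly in $\zeta$. Everything else is elementary convex-geometric bookkeeping once $v(q)$ has been identified with $\nabla\tilde E(0)$ via Hellmann--Feynman.
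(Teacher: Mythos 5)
Your overall route is the same as the paper's: identify $\nabla_\zeta E(\lambda,\zeta)=\lambda v(q)+o(\lambda)$ uniformly via Feynman--Hellmann (the paper phrases this as $\sup_{\zeta\in B}|\nabla_\zeta F(\lambda,\zeta)-v(q)|\to0$ for $F=\lambda^{-1}E$ on a ball $B\supset K$), upgrade the cone hypothesis to the linear lower bound $v(q)\cdot(\zeta-\zeta_0)\geq c\,|\zeta-\zeta_0|$, and conclude. Your test-vector trick $v=v(q)-(\varepsilon/2)(\zeta-\zeta_0)/|\zeta-\zeta_0|$ is a nice explicit version of the paper's compactness argument and even yields the constant $c=\varepsilon/2$ directly. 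The derivation of (H.1.2) from the gradient expansion at $\zeta_0$ is also fine, since there the error $O(\lambda^2)\cdot(\zeta-\zeta_0)$ is automatically proportional to $|\zeta-\zeta_0|$.

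There is, however, one step that does not follow as written: you deduce
\begin{equation*}
E(\lambda,\zeta)-E(\lambda,\zeta_0)\geq\tfrac14\lambda\varepsilon\,|\zeta-\zeta_0|
\end{equation*}
from the expansion $E(\lambda,\zeta)=E_0+\lambda v(q)\cdot\zeta+O(\lambda^2)$ centered at $\eta=0$, whose remainder is only \emph{uniformly} $O(\lambda^2)$. For $\zeta$ close to $\zeta_0$ (say $|\zeta-\zeta_0|\ll\lambda$) the main term $\tfrac12\lambda\varepsilon|\zeta-\zeta_0|$ is swamped by the $O(\lambda^2)$ error, so neither the claimed inequality nor even the strict minimality of $\zeta_0$ (hence (H.1.1)) is obtained this way. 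The fix is exactly what the paper does: write
\begin{equation*}
E(\lambda,\zeta)-E(\lambda,\zeta_0)=\int_0^1\nabla_\zeta E\bigl(\lambda,\zeta_0+t(\zeta-\zeta_0)\bigr)\cdot(\zeta-\zeta_0)\,dt
\end{equation*}
and use the \emph{uniform} gradient bound $\nabla_\zeta E(\lambda,\tilde\zeta)\cdot(\zeta-\zeta_0)\geq(\tfrac12\lambda\varepsilon-C\lambda^2)|\zeta-\zeta_0|$ at every point $\tilde\zeta$ of the segment; equivalently, center your Taylor expansion of $\tilde E$ at $\lambda\zeta_0$ rather than at $0$, so the remainder becomes $O(\lambda^2|\zeta-\zeta_0|^2)$. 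Note also that the segment $[\zeta_0,\zeta]$ need not lie in $K$ (which may be nonconvex, e.g.\ the boundary of a convex body), so the gradient control must be stated on a ball containing $K$ --- which your uniform-in-$\eta$ expansion does provide, but should be said.
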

\noindent Before we proceed to the proofs of Propositions~\ref{pro:1}
and~\ref{pro:2}, let us compare our setting to the one studied
in~\cite{MR2430638,BaLoSto:08,FN:09}. In those studies,
assumption~\eqref{eq:11} but also assumption~(H.1) are not
fulfilled. Indeed, there, $p$ and $q$ are assumed to be reflection
symmetric with respect to the coordinate planes i.e.  for any
$\sigma=(\sigma_1,\dots,\sigma_d)\in\{0,1\}^d$ and any
$x=(x_1,\dots,x_d)\in\R^d$,
\begin{equation*}
  q(x_1,\dots,x_d)=q((-1)^{\sigma_1}x_1,\dots,(-1)^{\sigma_d}x_d).
\end{equation*}
Hence, the potential $\D p(\cdot)+\sum_\gamma q(\cdot-\gamma)$ and the
ground state $\varphi_0$ satisfy the same reflection symmetry. This
implies that
\begin{equation*}
  \int_{\R^d}\nabla q(x)|\varphi_0(x)|^2dx=
  -\int_{\R^d}\nabla q(x)|\varphi_0(x)|^2dx=0.
\end{equation*}
The fact that, in the setting of~\cite{MR2430638,BaLoSto:08,FN:09},
assumption (H.1.1) is not satisfied is seen directly from those papers
as the ground state of the periodic operator $H_\zeta$ reaches its
minimum at $2^d$ values as soon as $K$ is reflection symmetric.
\subsection{The proofs of Propositions~\ref{pro:1} and~\ref{pro:2}}
\label{sec:proof-prop-refpr}
Consider the mapping $\zeta\mapsto F(\lambda,\zeta)=\lambda^{-1}
E(\lambda,\zeta)$ on some large ball $B$ containing $K$. As
$E(\lambda,\zeta)$ is a simple Floquet eigenvalue associated to the
normalized Floquet eigenvector $\varphi_0(\lambda,\zeta,0)$ (see
section~\ref{sec:floquet-theory}), we can compute the gradient of $F$
in the $\zeta$-variable using the Feynman-Hellmann formula to obtain
\begin{equation*}
  \nabla_\zeta F(\lambda,\zeta)=-\int_{\R^d}\nabla q(x-\lambda\zeta)
  |\varphi_0(\lambda,\zeta,0;x)|^2dx.
\end{equation*}
Hence,
\begin{equation}
  \label{eq:14}
  \sup_{\zeta\in B}|\nabla_\zeta F(\lambda,\zeta)-v(q)|\vers_{\lambda\to0}0
\end{equation}
\begin{proof}[Proof of Proposition~\ref{pro:1}]
  Assume first that $K$ is a convex set satisfying the assumptions of
  Proposition~\ref{pro:1}. Using the rectification theorem (see
  e.g.~\cite{MR2242407}), assumption~\eqref{eq:11} and
  equation~\eqref{eq:14} guarantee that, for $\lambda$ small, one can
  find a $C^2$-diffeomorphism, say $\Psi_\lambda$, from $B$ to
  $\Psi_\lambda(B)$ such that $|\Psi_\lambda-\text{Id}|_{C^2}\to0$
  when $\lambda\to 0$ and
  \begin{equation*}
    \nabla_\zeta(F(\lambda,\Psi_\lambda(\zeta)))=v(q).
  \end{equation*}
  Now, assume that $K$ is a convex set with a $C^2$-boundary having
  all its principal curvatures positive at all points. Then, for
  $\lambda$ small, the set $K_\lambda=\Psi^{-1}_\lambda(K)$ also is
  convex with a $C^2$-boundary having all its principal curvatures
  positive at all points; moreover, the curvatures are bounded away
  from $0$ independently of $\lambda$ for $\lambda$ small.\\
  On the convex set $K_\lambda$, the affine function
  $G(\zeta):=F(\lambda,\Psi_\lambda(\zeta))=v(q)\cdot\zeta+C_\lambda$
  reaches its infimum at a single point, say
  $\tilde\zeta(\lambda)=\Psi^{-1}_\lambda(\zeta(\lambda))$,
  $\zeta(\lambda)\in\partial K$.\\
  Hence, we have that, for $\zeta\in K\setminus\{\zeta(\lambda)\}$,
  $F(\lambda,\zeta)>F(\lambda,\zeta(\lambda))$. The convexity of $K$
  ensures that, for $\zeta\in K\setminus\{\zeta(\lambda)\}$, one has
  \begin{equation}
    \label{eq:8}
    \nabla_\zeta F(\lambda,\zeta(\lambda))\cdot(\zeta-\zeta(\lambda))\geq0.
  \end{equation}
  Indeed, as $K$ is convex, for $\nu\in(0,1)$ and $\zeta\in
  K\setminus\{\zeta(\lambda)\}$, one has
  $\zeta_\nu=\nu\zeta+(1-\nu)\zeta(\lambda)\in
  K\setminus\{\zeta(\lambda)\}$; thus
  $F(\lambda,\zeta_\nu)>F(\lambda,\zeta(\lambda))$. Taking the right
  hand side derivative of $\nu\mapsto F(\lambda,\zeta_\nu)$ at $\nu=0$
  yields~\eqref{eq:8}.\\
  The strict convexity of $K$, guaranteed by the positivity of the
  principal curvatures of $\partial K$, ensures that, for $\zeta\in
  K\setminus\{\zeta(\lambda)\}$, one has
  \begin{equation}
    \label{eq:13}
    \nabla_\zeta F(\lambda,\zeta(\lambda))\cdot(\zeta-\zeta(\lambda))>0.
  \end{equation}
  Indeed, assume that for some $\zeta_0\in
  K\setminus\{\zeta(\lambda)\}$, \eqref{eq:13} is not satisfied i.e
  $\nabla_\zeta F(\lambda,\zeta(\lambda))\cdot(\zeta_0-\zeta(\lambda))
  =0$. As $K$ is strictly convex, $K$ contains a cone of the form
  $\{\zeta(\lambda)+ r(\zeta_0-\zeta(\lambda))+ rw;\ \|w\|\leq1,\
  r\in[0,r_0]\}$ for some small $r_0>0$. Picking $w$ such that
  $\nabla_\zeta F(\lambda,\zeta(\lambda)) \cdot w<0$, one constructs
  $\zeta'\in K$ such that $\nabla_\zeta F(\lambda,\zeta(\lambda))
  \cdot(\zeta'-\zeta(\lambda))<0$ which contradicts~\eqref{eq:8}.\\
  To show~(\ref{eq:10}), it suffices to show that, for $\zeta\in K$,
  \begin{equation}
    \label{eq:4}
    \nabla_\zeta F(\lambda,\zeta(\lambda))\cdot(\zeta-\zeta(\lambda))
    \geq\frac1{C_0}|\zeta-\zeta(\lambda)|^2.
  \end{equation}
  Let $H_\lambda$ be the hyperplane orthogonal to $\nabla_\zeta
  F(\lambda,\zeta(\lambda))$ at $\zeta(\lambda)$. It intersects $K$ at
  $\zeta(\lambda)$ and $K$ is contained in one of the half-spaces
  defined by this hyperplane. Thus, the hyperplane is tangent to $K$
  at $\zeta(\lambda)$ (see e.g.~\cite{MR2311920}). Hence, there exists
  $\alpha_0>0$ such that, for $\zeta\in K$, one has
  \begin{equation}
    \label{eq:15}
    \nabla_\zeta F(\lambda,\zeta(\lambda))\cdot(\zeta-\zeta(\lambda))
    \geq\alpha_0\, d(\zeta,H_\lambda)^2
  \end{equation}
  where $d(\zeta,H_\lambda)$ denotes the distance from $\zeta$ to
  $H_\lambda$. The constant $\alpha_0$ can be chosen independent of
  $\lambda$ for $\lambda$ small as the principal curvatures of
  $\partial K$ are uniformly positive. Now, if $u=\Vert\nabla_\zeta
  F(\lambda,\zeta(\lambda))\Vert^{-1}\nabla_\zeta F(\lambda,\zeta
  (\lambda))$, for $\zeta\in K$ as $K$ is compact, one has
  \begin{equation}
    \label{eq:5}
    \begin{split}
      \nabla_\zeta
      F(\lambda,\zeta(\lambda))\cdot(\zeta-\zeta(\lambda))&=
      \Vert\nabla_\zeta F(\lambda,\zeta(\lambda))\Vert\,
      [u\cdot(\zeta-\zeta(\lambda))]\\&\geq
      \alpha_0\,[u\cdot(\zeta-\zeta(\lambda))]^2.
    \end{split}
  \end{equation}
  As $|\zeta-\zeta(\lambda)|^2=d(\zeta,H_\lambda)^2+
  [u\cdot(\zeta-\zeta(\lambda))]^2$, the lower bounds~(\ref{eq:15})
  and~\eqref{eq:5} imply~(\ref{eq:4}).\\
  To deal with the case when $K$ is the boundary of a convex set, we
  only need to do the analysis done above for the convex hull of $K$
  and notice that the minimum is attained on $K$ the boundary of this
  convex hull.\\
  This completes the proof of Proposition~\ref{pro:1}.
\end{proof}
\begin{proof}[Proof of Proposition~\ref{pro:2}]
  By assumption, for $\zeta\in K$ and $|v-v(q)|<\varepsilon$, one has
  $v\cdot(\zeta-\zeta_0)\geq0$. Hence, as $K$ is compact, there exists
  $c>0$ such that, for all $\zeta\in K$ and $|v-v(q)|<\varepsilon/2$,
  one has
  \begin{equation}
    \label{eq:16}
    v\cdot(\zeta-\zeta_0)\geq c|\zeta-\zeta_0|.
  \end{equation}
  Let $B$ be a closed ball centered in $\zeta_0$ such that $K\subset
  B$. By~(\ref{eq:14}), (\ref{eq:16})~implies that, for $\lambda$
  sufficiently small, for all $\tilde\zeta\in B$ and $\zeta\in K$, one
  has
  \begin{equation*}
    \nabla_\zeta F(\lambda,\tilde\zeta)\cdot(\zeta-\zeta_0)\geq
    c|\zeta-\zeta_0|.
  \end{equation*}
  Hence,
  \begin{equation*}
    \begin{split}
    F(\lambda,\zeta)-F(\lambda,\zeta_0)&=\int_0^1\nabla_\zeta
    F(\lambda,\zeta_0+t(\zeta-\zeta_0))
    \cdot(\zeta-\zeta_0)dt\\&\geq c|\zeta-\zeta_0|.      
    \end{split}
  \end{equation*}
  So $\zeta_0$ is the unique minimum of $\zeta\mapsto
  F(\lambda,\zeta)$ in $K$ i.e. for $\lambda$ sufficiently small,
  $\zeta(\lambda)=\zeta_0$. Using again the boundedness of $K$, we get
  the estimate~(\ref{eq:10}) of assumption (H.1). This completes the
  proof of Proposition~\ref{pro:2}.
\end{proof}
\section{The reduction to a discrete model}
\label{sec:caract-ground-state}
In this section, we prove the results announced in
section~\ref{sec:infimum-spectrum}. Therefore, we will use the Floquet
decomposition for periodic operators to reduce our operator to some
discrete model in the way it was done
in~\cite{Klopp:1999,Ghribi:2006}.
\subsection{Floquet theory}
\label{sec:floquet-theory}
Pick $\zeta \in K$ and let $H_\zeta$ be the $\Z^d$-periodic operator
defined by~\eqref{eq:6}. For $\theta\in\T^*:={\er}^d/(2\pi\Z^d)$ and
$u \in {\mathcal S}({\er}^d)$, the Schwartz space of rapidly decaying
functions, following~\cite{MR58:12429c}, we define
\begin{equation*}
  (Uu)(\theta,x)=\sum_{\gamma \in {\zed}^d} e^{i \gamma \cdot \theta} u(x-\gamma)
\end{equation*}
which can be extended as a unitary isometry from $L^2({\er}^d)$ to
${\mathcal H}:=L^2(K_0\times\T^*)$ where $K_0=(-1/2,1/2]^d$ is the
fundamental cell of ${\zed}^d$. The inverse of $U$ is given by
\begin{equation*}
  \mbox{for} \; v \in {\mathcal H}, \;
  (U^*v)(x)=\frac{1}{\mbox{Vol}(\T^*)} \int_{\T^*}
  v(\theta,x) d\theta.
\end{equation*}
As $H_{\lambda,\overline{\zeta}}$ is ${\zed}^d$-periodic,
$H_{\lambda,\overline{\zeta}}$ admits the Floquet decomposition
\begin{equation*}
  UH_{\lambda,\overline{\zeta}}U^*=\int^{\oplus}_{\T^*}
  H_{\lambda,\overline{\zeta}}(\theta) d\theta  
\end{equation*}
where $H_{\lambda,\overline{\zeta}}(\theta)$ is the differential
operator $H_{\lambda,\overline{\zeta}}$ acting on ${\mathcal
  H}_{\theta}$ with domain ${\mathcal H}_{\theta}^2$ where
\begin{itemize}
\item for $v\in\R^d$, $\tau_v:\ L^2(\R^d)\to L^2(\R^d)$ denotes the
  ``translation by $v$'' operator i.e for $\varphi\in L^2(\R^d)$ and
  $x\in\R^d$, $(\tau_v\varphi)(x)=\varphi(x-v)$;
\item ${\mathcal D}'_\theta$ is the space $\theta$-quasi-periodic
  distribution in $\R^d$ i.e the space of distributions $u\in{\mathcal
    D}'(\R^d)$ such that, for any $\gamma\in\Z^d$, we have
  $\D\tau_\gamma u=e^{-i\gamma\cdot \theta}u$. Here $\theta\in\T^*$;
\item $\mathcal{H}^k_{\text{loc}}(\R^d)$ is the space of distributions that
  locally belong to $\mathcal{H}^k(\R^d)$ and we define ${\mathcal
    H}^k_\theta=\mathcal{H}^k_{\text{loc}}(\R^d)\cap{\mathcal D}'_\theta$;
\item for $k=0$, we define ${\mathcal H}_\theta={\mathcal H}^0_\theta$
  and identify it with $L^2(K_0)$; equipped with the $L^2$-norm over
  $K_0$, it is a Hilbert space; the scalar product will be denoted by
  $\langle\cdot ,\cdot \rangle_\theta$.
\end{itemize}
We know that $H_{\lambda,\overline{\zeta}}(\theta)$ is self-adjoint
and has a compact resolvent; hence its spectrum is discrete. Its
eigenvalues repeated according to multiplicity, called Floquet
eigenvalues of $H_{\lambda,\overline{\zeta}}$, are denoted by
\begin{equation*}
  E_0(\lambda,\zeta,\theta) \leq
  E_1(\lambda,\zeta,\theta) \leq \cdots\leq
  E_n(\lambda,\zeta,\theta)\to+\infty.
\end{equation*}
The functions $((\lambda,\zeta,\theta)\mapsto
E_{n}(\lambda,\zeta,\theta))_{n \in \en}$ are Lipschitz-continuous in
the variable $\theta$; they are even analytic in
$(\lambda,\zeta,\theta)$ when they are simple eigenvalues.\\
Define $\varphi_n(\lambda,\zeta,\theta)$ to be a normalized
eigenvector associated to the eigenvalue $E_n(\lambda,\zeta,\theta)$.
The family $(\varphi_n(\lambda,\zeta,\theta))_{n\geq0}$ is chosen so
as to be a Hilbert basis of $\mathcal{H}_\theta$. If
$E_n(\lambda_0,\zeta_0,\theta_0)$ is a simple eigenvalue, the function
$(\lambda,\zeta,\theta)\mapsto \varphi_n (\lambda,\zeta,\theta)$ is
analytic near
$(\lambda_0,\zeta_0,\theta_0)$.\\
It is well known (see e.g.~\cite{MR89b:35127}) that, for given
$\lambda$ and $\zeta$, the eigenvalue $E_0(\lambda,\zeta,\theta)$
reaches its minimum at $\theta=0$, and that it is simple for $\theta$
small.
\subsection{The reduction procedure}
\label{sec:reduction-procedure}
Recall that the $(\varphi_n(\lambda,\zeta,\theta))_{n \geq 0}$ are the
Floquet eigenvectors of $H_{\lambda,\overline{\zeta}}$. Let
$\Pi_{\lambda,\zeta,0}(\theta)$ and $\Pi_{\lambda,\zeta,+}(\theta)$
respectively denote the orthogonal projections in ${\mathcal
  H}_{\theta}$ on the vector spaces respectively spanned by
$\varphi_0(\lambda,\zeta,\theta)$ and $(\varphi_n
(\lambda,\zeta,\theta))_{n\geq1}$. Obviously, these projectors are
mutually orthogonal and their sum is the identity for any $\theta \in
\T^*$.\\
Define $\Pi_{\lambda,\zeta,\alpha}=U^*\Pi_{\lambda,\zeta,\alpha}
(\theta)U$ where $\alpha \in \{0, + \}$.  $\Pi_{\lambda,\zeta,\alpha}$
is an orthogonal projector on $L^2({\er}^d)$ and, for $\gamma \in
{\zed}^d$, we have $\tau_{\gamma}^*\Pi_{\lambda,\zeta,\alpha}
\tau_{\gamma}=\Pi_{\lambda,\zeta,\alpha}$. It is clear that that
$\Pi_{\lambda,\zeta,0}+\Pi_{\lambda,\zeta,+}=Id_{L^2({\er}^d)}$ and
$\Pi_{\lambda,\zeta,0}$ and $\Pi_{\lambda,\zeta,+}$ are mutually
orthogonal. For $\alpha \in \{0, + \}$, we set ${\mathcal
  E}_{\lambda,\zeta,\alpha}=\Pi_{\lambda,\zeta,\alpha}(L^2({\er}^d))$. These
spaces are invariant under translations by vectors in $\Z^d$ and
${\mathcal E}_{\lambda,\zeta,0}$ is of finite energy
(see~\cite{Klopp:1999}).\\
For $u \in L^2(\T^*)$, we define
\begin{equation*}
  P_{\lambda,\zeta}(u)=U^*(u(\theta) \varphi_0(\lambda,\zeta,\theta)).
\end{equation*}
The mapping $P_{\lambda,\zeta} :L^2(\T^*) \rightarrow {\mathcal
  E}_{\lambda,\zeta,0}$ defines a unitary equivalence
(see~\cite{Klopp:1999}); its inverse is given by
\begin{equation*}
  P_{\lambda,\zeta}^*(v)=\langle (Uv)(\theta),
  \varphi_0(\lambda,\zeta,\theta)\rangle,\quad v \in
  {\mathcal E}_{\lambda,0}.
\end{equation*}
One checks that
$P_{\lambda,\zeta}P_{\lambda,\zeta}^*=\Pi_{\lambda,\zeta,0}$ and
$P_{\lambda,\zeta}^*P_{\lambda,\zeta}=Id_{L^2(\T^*)}$.\\
The main result of this section is
\begin{Th}
  \label{thr:3}
  Under assumptions (H.0) and (H.1), there exists $C_0>0$ such that,
  for any $\alpha>0$, there exists $\lambda_0>0$ such that, for
  $\lambda\in(0,\lambda_0)$, for any $\zeta\in K$ and any
  $\omega=(\omega_\gamma)_{\gamma\in\Z^d}\in K^{\Z^d}$, one has
  \begin{equation}
    \label{eq:7}
    \begin{split}
      \frac1{C_0} &\left(P_{\lambda,\zeta} h^-_{\lambda,\omega,\zeta}
        P^*_{\lambda,\zeta}+ \Pi_{\lambda,\zeta,+} \right)
      \\&\hskip1cm\leq
      H_{\lambda,\omega}-E(\lambda,\zeta)\\&\hskip2cm\leq C_0
      \left(P_{\lambda,\zeta} h^+_{\lambda,\omega,\zeta}
        P^*_{\lambda,\zeta}+
        \tilde{H}_{\lambda,\overline{\zeta},+}\right)
    \end{split}
  \end{equation}
  where
  \begin{itemize}
  \item $\tilde{H}_{\lambda,\overline{\zeta},+}=
    (H_{\lambda,\overline{\zeta}}-E(\lambda,\zeta))\Pi_{\lambda,\zeta,+}$.
  \item $h^\pm_{\lambda,\omega,\zeta}$ is the random operator acting
    on $L^2(\T^*)$ defined by
    \begin{equation*}
      \begin{split}
        h^+_{\lambda,\omega,\zeta}&=C_0\,\varpi(\cdot)+
        \lambda\sum_{\gamma\in\Z^d} \left[v(\lambda,\zeta)
          \cdot(\omega_\gamma-\zeta)+
          C_0\,\alpha\,\|\omega_{\gamma}-\zeta\|^2\right] \Pi_\gamma\\
        h^-_{\lambda,\omega,\zeta}&=\frac1{C_0}\,\varpi(\cdot)+
        \lambda\sum_{\gamma\in\Z^d} \left[v(\lambda,\zeta)
          \cdot(\omega_\gamma-\zeta)-
          C_0\,\alpha\,\|\omega_{\gamma}-\zeta\|^2\right] \Pi_\gamma
      \end{split}
    \end{equation*}
  \item $\varpi$ is the multiplication operator by the function
    \begin{equation}
      \label{eq:27}
      \varpi(\theta)=\sum_{j=1}^d (1-\cos(\theta_j)),
    \end{equation}
  \item $\Pi_\gamma$ is the orthogonal projector on
    $e^{i\gamma\theta}$,
  \item the vector $v(\lambda,\zeta)$ is given by
    \begin{equation}
      \label{eq:9}
      v(\lambda,\zeta)=-\int_{{\er}^d} \nabla q(x-\lambda\zeta)
      |\varphi_0(\lambda,\zeta,0;x)|^2 dx
      =\frac1{\lambda}\nabla_\zeta E(\lambda,\zeta).
    \end{equation}
  \end{itemize}
\end{Th}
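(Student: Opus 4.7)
The strategy is to adapt the Floquet reduction method of~\cite{Klopp:1999,Ghribi:2006} to the present displacement setting. I would start by writing $H_{\lambda,\omega}=H_{\lambda,\overline{\zeta}}+W_{\lambda,\omega,\zeta}$ with
$$W_{\lambda,\omega,\zeta}(x)=\sum_{\gamma\in\Z^d}\bigl[q(x-\gamma-\lambda\omega_\gamma)-q(x-\gamma-\lambda\zeta)\bigr].$$
Since $q$ is $C^2$ and compactly supported and $\omega_\gamma,\zeta\in K$ are bounded, Taylor's formula yields $W_{\lambda,\omega,\zeta}=W_1+W_2$, where $W_1=-\lambda\sum_\gamma(\omega_\gamma-\zeta)\cdot\nabla q(\cdot-\gamma-\lambda\zeta)$ and $W_2$ is a sum of pieces, each supported in a fixed neighborhood of $\gamma$, of pointwise size $\lesssim\lambda^2\|\omega_\gamma-\zeta\|^2$.

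Next I would split $H_{\lambda,\omega}-E(\lambda,\zeta)$ with the mutually orthogonal Floquet projectors $\Pi_0=\Pi_{\lambda,\zeta,0}$ and $\Pi_+=\Pi_{\lambda,\zeta,+}$. Under $P_{\lambda,\zeta}$, the diagonal block $\Pi_0(H_{\lambda,\overline{\zeta}}-E(\lambda,\zeta))\Pi_0$ becomes the multiplication operator $\theta\mapsto E_0(\lambda,\zeta,\theta)-E(\lambda,\zeta)$ on $L^2(\T^*)$. Because $\theta=0$ is a simple, non-degenerate minimum of this Floquet eigenvalue with Hessian uniformly positive for $\lambda$ small, one has $C_1^{-1}\varpi(\theta)\leq E_0(\lambda,\zeta,\theta)-E(\lambda,\zeta)\leq C_1\varpi(\theta)$. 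On the complement, $\tilde H_{\lambda,\overline{\zeta},+}$ is bounded below by a uniform spectral gap $g_0>0$ above $0$, since the next Floquet band stays separated from the ground band as $\lambda\to 0$.

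The key algebraic step is the explicit computation of $P_{\lambda,\zeta}^*\Pi_0 W_1\Pi_0 P_{\lambda,\zeta}$. Using the $\Z^d$-translation invariance $\tau_\gamma^*\Pi_0\tau_\gamma=\Pi_0$, its matrix in the Fourier basis $(e^{i\gamma'\theta})_{\gamma'\in\Z^d}$ takes the form $\lambda(\omega_\gamma-\zeta)\cdot A_{\gamma'-\gamma''}(\lambda,\zeta)$, where the diagonal entry $\gamma'-\gamma''=0$ is precisely $v(\lambda,\zeta)$ by the Feynman-Hellmann identity~\eqref{eq:9}. The off-diagonal matrix elements decay faster than any polynomial in $|\gamma'-\gamma''|$ because $\nabla q$ has compact support and the ground-state Floquet eigenvector is smooth in $\theta$; they can therefore be absorbed into $\eta\varpi(\theta)$ with arbitrarily small $\eta$ for $\lambda$ small. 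This produces the diagonal-in-$\gamma$ piece $\lambda\sum_\gamma v(\lambda,\zeta)\cdot(\omega_\gamma-\zeta)\Pi_\gamma$ appearing in $h^\pm$.

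The main obstacle is the control of the cross terms $\Pi_0 W\Pi_++\mathrm{h.c.}$ together with the quadratic remainder $W_2$. I would handle them with the elementary operator inequality $\pm(AB^*+BA^*)\leq\varepsilon AA^*+\varepsilon^{-1}BB^*$, choosing $A,B$ built from suitable factorizations of $\Pi_0 W$ and $\Pi_+ W$, and exploiting the gap $\tilde H_{\lambda,\overline{\zeta},+}\geq g_0\Pi_+$ to absorb the $\Pi_+$-side contributions $\varepsilon^{-1}\Pi_+WW^*\Pi_+$ into $C_0\tilde H_{\lambda,\overline{\zeta},+}$. The leftover contribution on the ground band has size $\lambda^2\sum_\gamma\|\omega_\gamma-\zeta\|^2$ in the operator sense and, transported to $L^2(\T^*)$, is dominated by $C_0\alpha\lambda\sum_\gamma\|\omega_\gamma-\zeta\|^2\Pi_\gamma$ provided $\lambda_0=\lambda_0(\alpha)$ is chosen small enough; the same argument disposes of the $W_2$ contribution directly. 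Collecting these bounds with the correct signs yields the upper inequality in~\eqref{eq:7} with $h^+$; the lower inequality with $h^-$ is obtained symmetrically, the kinetic coefficient changing from $C_0$ to $C_0^{-1}$ because the cross-term absorption now has to go the other way.
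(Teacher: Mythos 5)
Your overall architecture is the paper's: the splitting $V_{\lambda,\omega}=V_{\lambda,\overline{\zeta}}+\lambda V_{1,\lambda,\tilde{\omega}}+\lambda^2V_{2,\lambda,\tilde{\omega}}$, the $2\times2$ block decomposition of $H_{\lambda,\tilde\omega}$ with respect to ${\mathcal E}_{\lambda,\zeta,0}\overset{\perp}{\oplus}{\mathcal E}_{\lambda,\zeta,+}$, the two-sided comparison of the ground Floquet band $\theta\mapsto E_0(\lambda,\zeta,\theta)-E(\lambda,\zeta)$ with $\varpi$, the spectral gap on the range of $\Pi_{\lambda,\zeta,+}$, the Feynman--Hellmann identification of the diagonal coefficient with $v(\lambda,\zeta)$, and the absorption of all remainders into $\eta\varpi+C_0\alpha\lambda\|\omega_\gamma-\zeta\|^2\Pi_\gamma$ at the price of taking $\lambda_0=\lambda_0(\alpha)$. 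This is exactly the route of the paper (bounds on the diagonal block $H_{\lambda,\tilde\omega,0}$, then control of the off-diagonal terms of~\eqref{eq:29} by the diagonal ones).

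One step would fail as literally written: the cross terms. Taking $B=\Pi_{\lambda,\zeta,+}W$, so that $BB^*=\Pi_{\lambda,\zeta,+}WW^*\Pi_{\lambda,\zeta,+}$ is paid for by the gap, forces $A=\Pi_{\lambda,\zeta,0}$, and the inequality then leaves $\varepsilon\,\Pi_{\lambda,\zeta,0}$ on the ground band with $\varepsilon\gtrsim\lambda^2$ dictated by the gap budget. A term $c\lambda^2\Pi_{\lambda,\zeta,0}$ is \emph{not} dominated by $C_0P_{\lambda,\zeta}h^+_{\lambda,\omega,\zeta}P^*_{\lambda,\zeta}$, since $h^+$ vanishes at $\theta=0$ when $\omega=\overline{\zeta}$; nor is it of the form $\lambda^2\sum_\gamma\|\omega_\gamma-\zeta\|^2\Pi_\gamma$ that you assert in the next sentence. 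The inequality must be applied site by site with the weight on the correct factor: writing $W=\sum_\gamma W_\gamma$ with $|W_\gamma|\leq C\lambda\|\omega_\gamma-\zeta\|\chi_\gamma$ and $\chi_\gamma$ a fixed cutoff, one uses $\pm(\Pi_0W_\gamma\Pi_++\mathrm{h.c.})\leq\varepsilon^{-1}\Pi_0W_\gamma\chi_\gamma W_\gamma\Pi_0+\varepsilon\,\Pi_+\chi_\gamma\Pi_+$ with a \emph{fixed} $\varepsilon$ so that $\varepsilon\sum_\gamma\Pi_+\chi_\gamma\Pi_+\leq C\varepsilon\,\Pi_+$ fits under the gap, while the ground-band cost is genuinely $O(\varepsilon^{-1}\lambda^2\|\omega_\gamma-\zeta\|^2)$ per site and is absorbed into $C_0\alpha\lambda\|\omega_\gamma-\zeta\|^2\Pi_\gamma$ once $\lambda\leq c\,\varepsilon\,\alpha$. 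A similar caveat applies to your claim that the $\gamma$-off-diagonal part of $P^*_{\lambda,\zeta}\Pi_0W_1\Pi_0P_{\lambda,\zeta}$ is ``absorbed into $\eta\varpi$'' because of its rapid decay in $|\gamma'-\gamma''|$: decay is not the mechanism; what matters is that the kernel minus its value at $\theta=\theta'=0$ is $O(|\theta|+|\theta'|)=O(\sqrt{\varpi(\theta)}+\sqrt{\varpi(\theta')})$, and one again applies Cauchy--Schwarz against the weight $\lambda\|\omega_\gamma-\zeta\|$ to produce $\eta\varpi+\eta^{-1}\lambda^2\|\omega_\gamma-\zeta\|^2\Pi_\gamma$. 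With these corrections your argument closes and coincides with the paper's proof.
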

\noindent The proof of Theorem~\ref{thr:3} is the content of
section~\ref{sec:proof-theor-refthr:3}. We now use this result to
derive Theorem~\ref{thr:1} and~\ref{t2}.
\subsection{The characterization of the infimum of the almost sure
  spectrum}
\label{sec:char-infim-almost}
We now prove Theorem~\ref{thr:1}. Using Theorem~\ref{thr:3} for
$\zeta=\zeta(\lambda)$, we see that, for $\lambda$ sufficiently small,
for $\omega\in K^{(2n+1)^d}$, one has
\begin{equation}
  \label{eq:20}
  H_{\lambda,\omega,n}-E_\lambda
  \geq\frac1{C_0}\left(P_{\lambda,\zeta(\lambda)}
    h^-_{\lambda,\omega,\zeta(\lambda),\alpha,n}
    P^*_{\lambda,\zeta(\lambda)}+\Pi_{\lambda,\zeta(\lambda),+} \right).
\end{equation}
Using~\eqref{eq:10} and~(\ref{eq:9}), taking
$C_0\alpha\leq\alpha_0/2$, we get
\begin{equation*}
  C_0\,h^-_{\lambda,\omega,\zeta(\lambda),n} \geq \varpi(\cdot)+
  \frac{C_0 \lambda\,\alpha_0}2 \sum_{\beta \in
    (2n+1){\zed}^d}\;\; \sum_{\gamma \in {\zed}^d/(2n+1){\zed}^d}
  \|\omega_\gamma-\zeta(\lambda)\|^2\Pi_{\gamma+\beta}.
\end{equation*} 
As the spectrum of $h^-_{\lambda,\omega,\zeta(\lambda),n}$ is non
negative, the operator in the left hand side of~(\ref{eq:20}) is
clearly non negative; recall that $P_{\lambda,\zeta}
P_{\lambda,\zeta}^*+ \Pi_{\lambda,\zeta,+}=Id_{L^2}$,
$\Pi_{\lambda,\zeta,+}$ is an orthogonal projector and
$P_{\lambda,\zeta}^*$ is a partial unitary equivalence. \\
To prove Theorem~\ref{thr:1}, we will show that, if
$\omega\not=(\zeta(\lambda))_{\gamma\in{\zed}^d/(2n+1){\zed}^d}$,
then, there exists $c(\omega)>0$ such that
$h^-_{\lambda,\omega\zeta(\lambda),n}\geq c(\omega)$. Therefore,
recall that $h^-_{\lambda,\omega,\zeta(\lambda),n}$ is a periodic
operator so we can do its Floquet decomposition in the same way as in
section~\ref{sec:floquet-theory}. In the present case, as we deal with
a discrete model, the fiber operators will be finite dimensional
matrices (see e.g.~\cite{MR2003k:82050}); they can also be represented
as the operator $h^-_{\lambda,\omega,\zeta(\lambda),n}$ acting on the
finite dimensional space of linear combinations of the Dirac masses
$(\delta_{2\pi k/(2n+1)+\theta})_{k\in\Z^d/(2n+1)\Z^d}$; the
Floquet parameter $\theta$ belongs to $(2n+1)^{-1}\T^*$.\\
As $\varpi\geq0$ and $h^-_{\lambda,\omega,\zeta(\lambda),n}-\varpi
\geq0$, $0$ is in the spectrum of
$h^-_{\lambda,\omega,\zeta(\lambda),n}$ if and only if there exists
$\theta\in(2n+1)^{-1}\T^*$ and $v$, a linear combination of the Dirac
masses $(\delta_{2\pi k/(2n+1)+\theta})_{k\in\Z^d/(2n+1)\Z^d}$ (seen
as distributions on $\T^*$) such that $\varpi\cdot v=0$ and
$h^-_{\lambda,\omega,\zeta(\lambda),n}v=0$. Now, $\varpi\cdot v=0$ implies
that $\theta=0$ and $v=c\delta_0$. Hence,
$h^-_{\lambda,\omega,\zeta(\lambda),n}v=0$ implies that
\begin{equation*}
  \sum_{\gamma \in {\zed}^d/(2n+1){\zed}^d}
  \|\omega_\gamma-\zeta(\lambda)\|^2=0
\end{equation*}
i.e. $\omega=(\zeta(\lambda))_{\gamma\in{\zed}^d/(2n+1){\zed}^d}$.\\
So we see that the function $\omega\mapsto E_0^n(\lambda\omega)$
reaches its infimum only at the point
$\omega=(\zeta(\lambda))_{\gamma\in{\zed}^d/(2n+1){\zed}^d}$. This
completes the proof of Theorem~\ref{thr:1}. \qed
\subsection{The Lifshitz tails}
\label{sec:lifshitz-tail}
We now prove Theorem~\ref{t2}. Therefore, we again use the
reduction given by Theorem~\ref{thr:3}.\\
Fix $\zeta=\zeta(\lambda)$. First, the operators
$h^\pm_{\lambda,\omega,\zeta(\lambda)}$ are both standard discrete
Anderson models and, as such, admit each integrated density of states
that we denote by $N_r^\pm$. As we have seen in the previous section,
their spectra are contained in $\R^+$. \\
The inequality~(\ref{eq:7}) implies that, for $\lambda$ sufficiently
small and $E\in[0,1/C^2_0]$ where $C_0$ is the constant given in
Theorem~\ref{thr:3}, one has
\begin{equation*}
  N_r^+(E/C_0)\leq N_\lambda(E_\lambda+E)\leq N_r^-(C_0\,E).
\end{equation*}
Now, Theorem~\ref{t2} immediately follows from the existence of
Lifshitz tail for the Anderson models
$h^\pm_{\lambda,\omega,\zeta(\lambda)}$ (see e.g.~\cite{MR94h:47068})
which, in turn, follows from the facts that, for $\lambda$
sufficiently small, under our assumptions, if
$C_0\alpha\leq\alpha_0/2$, by~(\ref{eq:10}), the random variables
\begin{equation*}
  \omega^\pm_\gamma=[v(\lambda,\zeta(\lambda))
  \cdot(\omega_\gamma-\zeta(\lambda))]\pm C_0\alpha
  \|\omega_\gamma-\zeta(\lambda)\|^2 
\end{equation*}
are i.i.d, non negative, non trivial and $0$ belongs to their
support (see e.g.~\cite{MR94h:47068,Stollmann:2001}).\\
Now if the common distribution of the random variables
$(\omega_\gamma)_\gamma$ is such that, for all $\lambda$,
$\varepsilon$ and $\delta$ positive sufficiently small, one has
\begin{equation*}
  \mathbb{P}(\{|\omega_0-\zeta(\lambda)|\leq\varepsilon\})
  \geq e^{-\varepsilon^{-\delta}},
\end{equation*}
then, by virtue of~(\ref{eq:10}), for all $\lambda$, $\varepsilon$ and
$\delta$ positive sufficiently small, one has
\begin{equation*}
  \mathbb{P}(\{\omega^\pm_0 \geq\varepsilon\})\geq
  e^{-\varepsilon^{-\delta}}.
\end{equation*}
It is well known that, under this assumption, the Lifshitz exponent
for the density of states of the discrete Anderson model is equal to
$d/2$ (see e.g.~\cite{MR94h:47068}).\\
This completes the proof of Theorem~\ref{t2}.\qed
\subsection{The Wegner estimate}
\label{sec:wegner-estimate}
We now prove Theorem \ref{t3} using the results of~\cite{MR1934351}.
Let $H_{\lambda,r,\sigma,n}^P$ be the operator
$H_{\lambda,\omega,n}^P$ where the random variables
$(\omega_\gamma)_{\gamma\in\Z^d}$ are written in polar coordinates
i.e. $(\omega_\gamma)_{\gamma\in\Z^d}=
(r_\gamma(\omega)\,\sigma_\gamma(\omega))_{\gamma_\gamma\in\Z^d}$
where $r=(r_\gamma(\omega))_{\gamma\in\Z^d}$ has only non negative
components and $\sigma=(\sigma_\gamma(\omega))_{\gamma\in\Z^d}
\in[{\mathbb S}^{d-1}]^{\Z^d}$. Then, the basic observation is that
\begin{equation}
  \label{eq:19}
  \mathbb{P}(\text{dist}(\sigma(H_{\lambda,\omega,n}^P),E)\leq\varepsilon)
  = \mathbb{E}_\sigma\left(\mathbb{P}_r(\text{dist}(
    \sigma(H_{\lambda,r,\sigma,n}^P),E)\leq\varepsilon|\ \sigma)\right)
\end{equation}
where $\mathbb{P}_r(\cdot|\ \sigma)$ denotes the probability in the
$r$-variable conditioned on $\sigma$, and $\mathbb{E}_\sigma$, the
expectation in the $\sigma$-variable.\\
Now, fix $\sigma\in[{\mathbb S}^{d-1}]^{\Z^d}$. Using the notations of
section~\ref{sec:wegner-estimate-1}, we write
\begin{equation}
  \label{eq:21}
  H_{\lambda,\omega}=H_0+\lambda\sum_{\gamma\in\Z^d}
  r_\sigma(\omega_\gamma)v_{\sigma_\gamma}(\cdot-\gamma)
  +\lambda^2V_{2,\omega,\lambda}
\end{equation}
where
\begin{itemize}
\item $v_{\sigma_\gamma}=-\sigma_\gamma\cdot\nabla q$,
\item $V_{2,\omega,\lambda}$ is a potential bounded uniformly in
  $\lambda$ and $\omega$.
\end{itemize}
As $q$ is $C^2$ with compact support, for any $\sigma_0\in{\mathbb
  S}^{d-1}$, $v_{\sigma_0}$ is $C^1$ with compact support and does not
vanish identically. Assumptions (H.0.2) and (H.3) guarantee that the
random variables $(r_\gamma(\omega))_{\gamma\in\Z^d}$ are independent
and nicely distributed.\\
Hence, the model~\eqref{eq:21} satisfies the assumptions considered in
section 6 of~\cite{MR1934351} except for the fact that, in the present
case, $V_{2,\omega,\lambda}$ depends on $\lambda$. This does not
matter as it is bounded uniformly in $\lambda$. In particular, Lemma
6.1 of~\cite{MR1934351} from asserts that there exists $\lambda_0>0$
such that, for $\lambda\in(0,\lambda_0]$, there exists $C_\lambda>0$
such that, for all $E\in[E_{\lambda}, E_{\lambda}+\lambda/C]$ and
$\varepsilon >0$ such that
\begin{equation}
  \label{eq:22}
  \mathbb{P}(\text{dist}(\sigma(H_{\lambda,\omega,\sigma,n}^P),E)\leq
  \varepsilon|\ \sigma)\leq C_\lambda\left[\sup_{\gamma\in\Z^d}
    \|h_{\sigma_\gamma}'\|_\infty\right]\,\varepsilon^{\nu}n^d.
\end{equation}
The explicit form of the the constant appearing on the right side of
formula~\eqref{eq:22} is obtained by following the proof of Lemma 6.1
in~\cite{MR1934351}.\\
The bound~\eqref{eq:23} then guarantees that the sup in~\eqref{eq:22}
is essentially bounded as a function of $\sigma$. We then complete the
proof of Theorem~\ref{t3} by integrating~\eqref{eq:22} with respect to
$\sigma$ and using~\eqref{eq:19}.\qed
\section{Proof of Theorem~\ref{thr:3}}
\label{sec:proof-theor-refthr:3}
We now turn to the proof of Theorem~\ref{thr:3}. The proof follows the
spirit of~\cite{Klopp:1999,Ghribi:2006}.\\
For $\gamma\in\Z^d$, define
$\tilde{\omega}=(\tilde{\omega}_\gamma)_{\gamma \in
  {\zed}^d}=(\omega_\gamma-\zeta)_{\gamma \in {\zed}^d}$. Write
\begin{equation}
  \label{eq:25}
  V_{\lambda,\omega}=V_{\lambda,\overline{\zeta}}
  +\lambda\,\delta V_{\lambda,\tilde{\omega}}=
  V_{\lambda,\overline{\zeta}}+\lambda
  V_{1,\lambda,\tilde{\omega}} 
  +\lambda^2V_{2,\lambda,\tilde{\omega}}
\end{equation}
where
\begin{equation}
  \label{eq:24}
  V_{1,\lambda,\tilde{\omega}}=-\sum_{\gamma\in\Z^d}\nabla 
  q(x-\gamma-\lambda\zeta)\cdot\tilde\omega_\gamma.
\end{equation}
We decompose our random Hamiltonian $H_{\lambda,\tilde{\omega}}
:=H_{\lambda,\omega}$ on the transla\-tion-invariant subspaces
${\mathcal E}_{\lambda,\zeta,0}$ and ${\mathcal E}_{\lambda,\zeta,+}$
defined in the section~\ref{sec:reduction-procedure}. Thus, we obtain
the random operators
\begin{equation*}
  H_{\lambda,\tilde{\omega},0}=\Pi_{\lambda,\zeta,0}
  H_{\lambda,\tilde{\omega}}\Pi_{\lambda,\zeta,0}\quad\text{and}
  \quad  H_{\lambda,\tilde{\omega},+}=\Pi_{\lambda,\zeta,+} 
  H_{\lambda,\tilde{\omega}}\Pi_{\lambda,\zeta,+}.
\end{equation*}
In the orthogonal decomposition of $L^2({\er}^d)={\mathcal
  E}_{\lambda,\zeta,0}\overset{\perp}{\oplus} {\mathcal
  E}_{\lambda,\zeta,+}$, $H_{\lambda,\tilde{\omega}}$ is represented
by the matrix
\begin{equation}
  \label{eq:29}
  \begin{pmatrix}
    H_{\lambda,\tilde{\omega},0}&
    \lambda\,\Pi_{\lambda,\zeta,0}\,\delta
    V_{\lambda,\tilde{\omega}}\, \Pi_{\lambda,\zeta,+} \\
    \lambda \Pi_{\lambda,\zeta,+}\,\delta
    V_{\lambda,\tilde{\omega}}\,\Pi_{\lambda,\zeta,0}&
    H_{\lambda,\tilde{\omega},+}.
  \end{pmatrix}
\end{equation}
In section~\ref{sec:study-pi_lambda-0-5}, we give lower and upper
bounds on $H_{\lambda,\tilde{\omega},0}$ which we prove in
section~\ref{sec:proofs-prop-refpr1}. Theorem~\ref{thr:3} then follows
from the fact that the off-diagonal terms in~\eqref{eq:29} are
controlled by the diagonal ones; this is explained in
section~\ref{sec:lower-upper-bounds}.
\subsection{The operator $H_{\lambda,\tilde{\omega},0}$}
\label{sec:study-pi_lambda-0-5}
In this section, using the non-degeneracy for the density of states of
$H_{\lambda,\overline{\zeta}}$ at $E(\lambda,\zeta)$, we give lower
and upper bounds on $H_{\lambda,\tilde{\omega},0}$.\\
As seen in section~\ref{sec:reduction-procedure}, the operator
$H_{\lambda,\tilde{\omega},0}$ is unitarily equivalent to the operator
$h_{\lambda,\tilde{\omega}}$ acting on $L^2(\T^*)$ and defined by
\begin{equation*}
  h_{\lambda,\tilde{\omega}}=h_{\lambda}+\lambda
  v_{1,\lambda,\tilde{\omega}} +
  {\lambda}^2 v_{2,\lambda,\tilde{\omega}}, 
\end{equation*}
where
\begin{itemize}
\item $h_{\lambda}$ is the multiplication by
  $E_0(\lambda,\zeta,\theta)$,
\item the operator $v_{1,\lambda,\tilde{\omega}}$ has the kernel
  \begin{equation}
    \label{eq:26}
    v_{1,\lambda,\tilde{\omega}}(\theta,
    \theta')=\langle V_{1,\lambda,\tilde{\omega}}\,
    \varphi_0(\lambda,\zeta,\theta,
    \cdot),\varphi_0(\lambda,\zeta,\theta',\cdot)\rangle_{L^2(K_0)},
  \end{equation}
\item the operator $v_{2,\lambda,\tilde{\omega}}$ has the kernel
  \begin{equation*}
    v_{2,\lambda,\tilde{\omega}}(\theta, \theta')=\langle
    V_{2,\lambda,\tilde{\omega}}\,\varphi_0(\lambda,\zeta,\theta,
    \cdot), \varphi_0(\lambda,\zeta,\theta', \cdot)\rangle_{L^2(K_0)} .
  \end{equation*}
\end{itemize}
The potential $V_{1,\lambda,\tilde{\omega}}$ and
$V_{2,\lambda,\tilde{\omega}}$ are defined in~(\ref{eq:25})
and~(\ref{eq:24}). They are bounded uniformly in all parameters. This
will be used freely without a special mention. \\
We now recall a number of facts and definitions taken
from~\cite{Klopp:1999}. Let $t \in L^2(\T^*, {\mathcal
  H}_{\theta})$. We define the operator $P_t:\ L^2(\T^*)
\rightarrow L^2({\er}^d)$ by
\begin{equation*}
  \forall u \in L^2(\T^*),\ [P_t(u)](x)=\int_{\T^*}
  t(\theta,x) u(\theta) d\theta.
\end{equation*}
It satisfies 
\begin{equation}
  \label{eq:32}
  \|P_t\|_{L^2(\T^*)
    \rightarrow L^2({\er}^d)}\leq \|t\|_{L^2(\T^*, {\mathcal
      H}_{\theta})}.
\end{equation}
As the Floquet eigenvalue $E_0(\lambda,\zeta,\theta)$ is simple in a
neighborhood of $0$, the Floquet eigenvector
$\varphi_0(\lambda,\zeta,\theta, \cdot)$ is analytic in this
neighborhood.\\
Recall that $\varpi$ is defined in~(\ref{eq:27}). We
define the functions $\varphi_{0,\lambda,\zeta},
\,\tilde{\varphi}_{0,\lambda,\zeta}$ and $\delta
\varphi_{0,\lambda,\zeta}$ in $L^2(\T^*, {\mathcal H}_{\theta})$ by
\begin{gather*}
  \varphi_{0,\lambda,\zeta}(\theta,x)=\varphi_0
  (\lambda,\zeta,\theta;x),\quad
  \tilde{\varphi}_{0,\lambda,\zeta}(\theta,x)=
  \varphi_{0,\lambda,\zeta}(0,x) e^{i \theta \cdot x}\\
  \delta \varphi_{0,\lambda,\zeta}(\theta,x)=
  \frac{1}{\sqrt{\varpi(\theta)}}(\varphi_0(\lambda,.\zeta,\theta;x)
  -\tilde{\varphi}_{0,\lambda,\zeta}(\theta,x)).
\end{gather*}
Furthermore, these functions are bounded in $L^2(\T^*,
{\mathcal H}_{\theta})$ uniformly in $\zeta$ and $\lambda$ small.\\
Finally, we note that, for $u \in L^2(\T^*)$,
\begin{equation}
  \label{f1}
  P_{\varphi_{0,\lambda,\zeta}}(u)=
  P_{\tilde{\varphi}_{0,\lambda,\zeta}}(u)+P_{\delta
    \varphi_{0,\lambda,\zeta}}(\sqrt{\varpi} u). 
\end{equation} 
\begin{Rem}
  \label{k1}
  It is proved in~\cite{Klopp:1999} that, there exits $C>1$ such that,
  as operators on $L^2(\T)$, one has
  \begin{equation*}
    \frac{1}{C}\, \varpi\, \leq h_{\lambda}-E(\lambda,\zeta)\leq
    C\, \varpi. 
  \end{equation*}
\end{Rem}
\subsubsection{Lower and upper bounds on
  $v_{1,\lambda,\tilde{\omega}}$ and $v_{2,\lambda,\tilde{\omega}}$}
\label{sec:lower-bound-v_1}
\begin{Pro}
  \label{pr1}
  Recall that $v(\lambda,\zeta)$ is defined in~(\ref{eq:9}). There
  exists $C>0$ such that, for $u \in L^2(\T^*)$ and $\alpha>0$, we
  have
  \begin{multline}
    \label{eq:30}
    \left|\langle v_{1,\lambda,\tilde{\omega}}u,u\rangle -
      \sum_{\gamma \in {\zed}^d} [v(\lambda,\zeta)
      \,\cdot \tilde{\omega}_{\gamma}] \cdot |\hat{u}(\gamma)|^2\right|\\
    \leq C \left(\alpha \, \sum_{\gamma \in\Z^d} \|
      \tilde{\omega}_{\gamma} \|^2 \cdot |\hat{u}(\gamma)|^2
      +\left(1+\frac{1}{\alpha}\right) \langle \varpi u,u\rangle \right)
  \end{multline}
  and
  \begin{equation}
    \label{eq:31}
    \begin{split}
    |\langle v_{2,\lambda,\tilde{\omega}}u,u\rangle|&+
    \|V_{1,\lambda,\tilde{\omega}} P_{\varphi_{0,\lambda,\zeta}}(u)
    \|^2 + \|V_{2,\lambda,\tilde{\omega}}
    P_{\varphi_{0,\lambda,\zeta}}(u) \|^2 \\&\leq C \left(\sum_{\gamma
        \in {\zed}^d} \|\tilde{\omega}_{\gamma}\|^2
      \cdot |\hat{u}(\gamma)|^2 +\langle \varpi u,u\rangle\right).      
    \end{split}
  \end{equation}
\end{Pro}
\noindent Proposition~\ref{pr1} is proved in
section~\ref{sec:proofs-prop-refpr1}. We now use these results to give
lower and upper bounds on $H_{\lambda,\tilde{\omega},0}$.
\subsubsection{Lower and upper bounds for
  $H_{\lambda,\tilde{\omega},0}-E(\lambda,\zeta)$}
\label{sec:lower-bound-h}
We prove
\begin{Pro}
  \label{pr3}
  Under assumptions (H.0) and (H.1), there exists $C_0>0$ such that,
  for $\alpha>0$, there exists $\lambda_\alpha>0$ and $C_\alpha>0$
  such that, for all $\lambda \in [0,\lambda_0]$, on ${\mathcal
    E}_{\lambda,\zeta,0}$, one has
  \begin{equation*}
    \frac{1}{C_0} P_{\lambda,\zeta}
    h^-_{\lambda,\tilde{\omega},\zeta}P_{\lambda,\zeta}^* \leq
    \tilde{H}_{\lambda,\tilde{\omega},0}:=H_{\lambda,\tilde{\omega},0}
    -E(\lambda,\zeta) \leq C_0
    P_{\lambda,\zeta}h^+_{\lambda,\tilde{\omega},\zeta}P^*_{\lambda,\zeta},
  \end{equation*}
  where $h^{\pm}_{\lambda,\tilde{\omega},\zeta}$ are the random
  operators defined in Theorem~\ref{thr:3}.
\end{Pro}
\begin{proof}[Proof of Proposition~\ref{pr3}.]
  For $\lambda$ small, Proposition~\ref{pr1} and Remark~\ref{k1} imply
  that, there exists $C_0>0$ such that, for $\alpha>0$, there exists
  $\lambda_\alpha>0$ such that for all $\lambda
  \in[0,\lambda_\alpha]$,
  \begin{equation*}
    \frac1{C_0}\varpi +\lambda \sum_{\gamma \in {\zed}^d}
    [v(\lambda,\zeta) \,\cdot \tilde{\omega}_{\gamma}- C_0\alpha \|
    \tilde{\omega}_{\gamma} \|^2] \cdot \Pi_{\gamma} \leq
    h_{\lambda,\tilde{\omega}} -E(\lambda,\zeta)
  \end{equation*}
  and
  \begin{equation*}
    h_{\lambda,\tilde{\omega}}-E(\lambda,\zeta) \leq C_0\,\varpi + \lambda
    \sum_{\gamma \in {\zed}^d} [v(\lambda,\zeta)
    \,\cdot \tilde{\omega}_{\gamma}+C_0\alpha \| \tilde{\omega}_{\gamma}
    \|^2] \cdot \Pi_{\gamma}.
  \end{equation*}
  As $H_{\lambda,\tilde{\omega},0}$ and $h_{\lambda,\tilde{\omega}}$
  are unitarily equivalent, this completes the proof of
  Proposition~\ref{pr3}.
\end{proof}
\subsection{The operator $H_{\lambda,\tilde{\omega},+}$}
\label{sec:lower-upper-bounds}
By the definition of $\Pi_{\lambda,\zeta,+}$, there exists $\eta>0$
such that, for $\lambda$ sufficiently small,
\begin{equation*}
   (E(\lambda,\zeta)+\eta)\Pi_{\lambda,\zeta,+} \leq
  \Pi_{\lambda,\zeta,+}H_{\lambda,\overline{\zeta}}\Pi_{\lambda,\zeta,+}.
\end{equation*}
Let $\tilde{H}_{\lambda,\overline{\zeta},+}=
H_{\lambda,\overline{\zeta},+}-E(\lambda,\zeta)$ and
$\tilde{H}_{\lambda,\tilde\omega,+}=
H_{\lambda,\tilde\omega,+}-E(\lambda,\zeta)$ (see~\eqref{eq:29}). As
$|V_{1,\lambda,\tilde{\omega}}|$ and $|V_{1,\lambda,\tilde{\omega}}|$
are bounded, for $\lambda$ sufficiently small, one has
\begin{equation}
  \label{v1} 
  \frac{\eta}{2}\,\Pi_{\lambda,\zeta,+} \leq\frac{1}{2}
  \tilde{H}_{\lambda,\overline{\zeta},+} \leq
  \tilde{H}_{\lambda,\tilde{\omega},+}
  \leq 2\tilde{H}_{\lambda,\overline{\zeta},+}.
\end{equation}
\subsection{The proof of Theorem~\ref{thr:3}}
\label{sec:pf_th21}
For $\varphi=\varphi_0+\varphi_+\in {\mathcal
  E}_{\lambda,\zeta,0}\overset{\perp}{\oplus}{\mathcal
  E}_{\lambda,\zeta,+}$, by~(\ref{eq:29}), one has
\begin{equation*}
  \begin{split}
    &\left| \langle \tilde{H}_{\lambda,\tilde{\omega}} \varphi,
      \varphi\rangle- \langle \tilde{H}_{\lambda,\tilde{\omega},0}
      \varphi_0, \varphi_0\rangle-\langle
      \tilde{H}_{\lambda,\tilde{\omega},+}
      \varphi_+,\varphi_+\rangle\right|\\ &\hskip2cm\leq 2 \lambda
    |\langle V_{1,\lambda,\tilde{\omega}} \varphi_+,
    \varphi_0\rangle| +2 {\lambda}^2 |\langle
    V_{2,\lambda,\tilde{\omega}}\varphi_+, \varphi_0\rangle|.
  \end{split}
\end{equation*}
Using the Cauchy-Schwarz inequality, we get
\begin{equation*}
  |\langle V_{1,\lambda,\tilde{\omega}}\varphi_+,\varphi_0\rangle|+
  |\langle V_{2,\lambda,\tilde{\omega}} \varphi_+, \varphi_0\rangle|
  \leq 2 \|V_{1,\lambda,\tilde{\omega}}\varphi_0 \|^2+
  2\|V_{2,\lambda,\tilde{\omega}}\varphi_0 \|^2+4\|\varphi_+\|^2.
\end{equation*}
Then, the decomposition~\eqref{eq:29} and~\eqref{v1} give
\begin{multline}
  \label{f}
  \frac{1}{C} \begin{pmatrix}
    \tilde{H}_{\lambda,\tilde{\omega},0}-C\lambda K_0
    &0\\
    0&\tilde{H}_{\lambda,\tilde{\omega},+}
    -C\lambda\Pi_{\lambda,\zeta,+}
  \end{pmatrix}
  \leq \tilde{H}_{\lambda,\tilde{\omega}}\\=
  \tilde{H}_{\lambda,\tilde{\omega}}\leq C\begin{pmatrix}
    \tilde{H}_{\lambda,\tilde{\omega},0}+C\lambda K_0&0\\
    0&\tilde{H}_{\lambda,\tilde{\omega},+}+
    C\lambda\Pi_{\lambda,\zeta,+} 
  \end{pmatrix}
\end{multline}
where
\begin{equation*}
  K_0=
  \Pi_{\lambda,\zeta,0}(V_{1,\lambda,\tilde{\omega}}^2+
  V_{2,\lambda,\tilde{\omega}}^2)\Pi_{\lambda,\zeta,0}.
\end{equation*}
The estimate~(\ref{s1}) of Proposition~\ref{pr1} implies that
\begin{equation*}
  K_0\leq C P_{\lambda,\zeta}\left(\varpi+\sum_{\gamma\in{\zed}^d}
    \|\tilde{\omega}_{\gamma}\|^2\Pi_\gamma\right)
  P^*_{\lambda,\zeta}.
\end{equation*}
On the other hand,~(\ref{v1}) implies that, for $\lambda$
sufficiently small
\begin{equation*}
  \frac12\tilde{H}_{\lambda,\tilde{\omega},+} \leq
  \tilde{H}_{\lambda,\tilde{\omega},+}-C\lambda
  \Pi_{\lambda,\zeta,+}   \leq
  \tilde{H}_{\lambda,\tilde{\omega},+}+C\lambda
  \Pi_{\lambda,\zeta,+}\leq
  2\tilde{H}_{\lambda,\tilde{\omega},+}.
\end{equation*}
Combining these two estimates with~(\ref{f}) and
Proposition~\ref{pr3}, we complete the proof of
Theorem~\ref{thr:3}.
\subsection{The proof of Propositions~\ref{pr1}}
\label{sec:proofs-prop-refpr1}
We first prove
\begin{Le}
  \label{r}
  There exists a constant $C>0$ such that, for all $u \in L^2({\mathbb
    T}^d)$ and $\alpha>0$, one has
  \begin{gather}
    \label{s}
    \begin{split}
      &\left|\langle V_{1,\lambda,\tilde{\omega}}
        P_{\tilde{\varphi}_{0,\lambda,\zeta}}
        (u),P_{\tilde{\varphi}_{0,\lambda,\zeta} }(u)\rangle
        -\sum_{\gamma \in {\zed}^d}
        [v(\lambda,\zeta)\,\cdot \tilde{\omega}_{\gamma}]\;
        |\hat{u}(\gamma)|^2\right|\\
      &\hskip1.5cm\leq C\left(\alpha\,\sum_{\gamma \in {\zed}^d} \|
        \tilde{\omega}_{\gamma} \|^2 \cdot |\hat{u}(\gamma)|^2+
        \left(1+\frac{4}{\alpha}\right) \langle \varpi
        u,u\rangle\right),
    \end{split}\\
    \label{s1}
    \begin{split}
      &\|V_{1,\lambda,\tilde{\omega}}
      P_{\tilde{\varphi}_{0,\lambda,\zeta}}(u) \|^2 +
      \|V_{2,\lambda,\tilde{\omega}}
      P_{\tilde{\varphi}_{0,\lambda,\zeta}}(u) \|^2\\ &\hskip2cm\leq
      C\left(\alpha\sum_{\gamma\in{\zed}^d} \|\tilde{\omega}_{\gamma}
        \|^2\cdot |\hat{u}(\gamma)|^2 +\left(1+\frac{4}{\alpha}\right)
        \langle \varpi u,u\rangle\right),
    \end{split}
  \end{gather}
\end{Le}
\begin{proof}[Proof of Lemma~\ref{r}.]
  We compute
  \begin{multline*}
    \langle V_{1,\lambda,\tilde{\omega}}
    P_{\tilde{\varphi}_{0,\lambda,\zeta}}(u),
    P_{\tilde{\varphi}_{0,\lambda,\zeta}}(u)\rangle \\=-\sum_{\gamma
      \in {\zed}^d} \tilde{\omega}_{\gamma} \cdot \int_{{\er}^d} \nabla
    q(x-\lambda \zeta) |\varphi_0(\lambda,\zeta,0;x)|^2
    \cdot|\phi_{\gamma}(u)(x)|^2 dx,
  \end{multline*}
  where $\phi_{\gamma}(u)(x)=\int_\T
  e^{i \theta \cdot \gamma} \cdot e^{i \theta \cdot x} u(\theta) d\theta.$ \\
  Recall that $0$ is the unique zero of $\varpi$ on $\T^*$ and it is
  non-degenerate. Thus, the function
  $g(\theta,x)=\varpi(\theta)^{-1/2}(e^{i\theta\cdot x} -1)$ is defined on
  $\T\times {\er}^d$ and
  \begin{equation*}
    \sup_{(\theta,x)\in \T^*
      \times \er}(1 +|x|)^{-1}|g(\theta,x)|<+\infty .
  \end{equation*}
  For $\gamma\in{\zed}^d$, $u\in L^2(\T^*)$ and $x\in\er^d$, one has
  \begin{equation*}
    \psi_{\gamma}(u)(x)=\phi_{\gamma}(u)(x)-\hat{u}(\gamma)
    =\int_{\T^*} g(\theta,x)e^{i
      \gamma \cdot \theta} \sqrt{\varpi(\theta)} u(\theta)
    d\theta.
  \end{equation*}
  Note that
  \begin{equation}
    \label{eq:2}
    \begin{split}
      \sum_{\gamma\in\Z^d}|\psi_{\gamma}(u)(x)|^2&=\int_{\T^*}
      |g(\theta,x)|^2 |\sqrt{\varpi(\theta)}u(\theta)|^2
      d\theta\\&\leq C(1+|x|)^2\langle\varpi u,u\rangle.
    \end{split}
  \end{equation}
  Recall that $v(\lambda,\zeta)$ is defined by~\eqref{eq:9}. We define
  \begin{gather*}
    v'_{1,\lambda,\tilde{\omega}}[u]=-\sum_{\gamma\in{\zed}^d}
    \tilde{\omega}_{\gamma} \cdot \int_{{\er}^d} \nabla q(x-\lambda \zeta)
    |\varphi_0(\lambda,\zeta,0;x)|^2 |\psi_{\gamma}(u)(x)|^2dx,\\
    v''_{1,\lambda,\tilde{\omega}}[u] =\langle
    V_{1,\lambda,\tilde{\omega}}
    P_{\tilde{\varphi}_{0,\lambda,\zeta}}(u),
    P_{\tilde{\varphi}_{0,\lambda,\zeta}}(u)\rangle- \sum_{\gamma \in
      {\zed}^d} v(\lambda,\zeta) \cdot \tilde{\omega}_{\gamma}
    |\hat{u}(\gamma)|^2- v'_{1,\lambda,\tilde{\omega}}[u].
  \end{gather*}
  As the random variables $(\tilde{\omega}_{\gamma})_{\gamma \in
    {\zed}^d}$ are bounded, by~\eqref{eq:2}, we compute
  \begin{equation*}
    \begin{split}
      |v'_{1,\lambda,\tilde{\omega}}[u]| &\leq C\int_{{\er}^d}
      \|\nabla q(x-\lambda \zeta)\| |\varphi_0(\lambda,\zeta,0;x)|^2
      \sum_{\gamma\in{\zed}^d} |\psi_{\gamma}(u)(x)|^2dx \\ &\leq
      C\langle\varpi\,u,u\rangle\int_{{\er}^d} \|\nabla q(x-\lambda
      \zeta)\| |\varphi_0(\lambda,\zeta,0;x)|^2 (1+|x|)^{2}dx \\ &\leq
      C\langle\varpi\,u,u\rangle.
    \end{split}
  \end{equation*}
  By the Cauchy-Schwarz inequality, one has
  \begin{equation*}
    \begin{split}
      |v''_{1,\lambda,\tilde{\omega}}[u]|&=2\left|\text{Re}
        \left(\sum_{\gamma\in{\zed}^d} \hat{u}(\gamma)
          \tilde{\omega}_{\gamma}\cdot \int_{{\er}^d} \nabla q(x-\lambda
          \zeta) |\varphi_0(\lambda,\zeta,0;x)|^2
          \overline{\psi_{\gamma}(u)(x) }dx\right)\right| \\&\leq
      \alpha\left[\int_{{\er}^d} \|\nabla q(x-\lambda \zeta)\|\,
        |\varphi_0(\lambda,\zeta,0;x)|^2 dx\right] \sum_{\gamma \in
        {\zed}^d} \|\tilde{\omega}_{\gamma}\|^2\,|\hat{u}(\gamma)|^2
      \\&\hskip1cm+ \frac{4}{\alpha} \sum_{\gamma \in {\zed}^d}
      \int_{{\er}^d} \|\nabla q(x-\lambda \zeta)\|
      \,|\varphi_0(\lambda,\zeta,0;x)|^2 | \psi_{\gamma}(u)(x)|^2 dx.
    \end{split}
  \end{equation*}
  Using~\eqref{eq:2}, we obtain that
  \begin{equation*}
    |v''_{1,\lambda,\tilde{\omega}}[u]|\leq  C\left( \alpha
      \sum_{\gamma \in
        {\zed}^d} \|\tilde{\omega}_{\gamma}\|^2 \cdot |\hat{u}(\gamma)|^2 +
      \frac4{\alpha}\langle \varpi u,u \rangle\right).
  \end{equation*}
  Finally, adding this to the estimate for
  $|v'_{1,\lambda,\tilde{\omega}}[u]|$, we get
  \begin{multline*}
    \left|\langle V_{1,\lambda,\tilde{\omega}}
      P_{\tilde{\varphi}_{0,\lambda,\zeta}}(u),
      P_{\tilde{\varphi}_{0,\lambda,\zeta}}(u)\rangle-\sum_{\gamma \in
        {\zed}^d} v(\lambda,\zeta)\,\cdot \tilde{\omega}_{\gamma}
      |\hat{u}(\gamma)|^2\right| \\\leq C\left( \alpha \sum_{\gamma
        \in {\zed}^d} \|\tilde{\omega}_{\gamma}\|^2
      \cdot |\hat{u}(\gamma)|^2+\left(1+\frac{4}{\alpha}\right) \langle
      \varpi u,u \rangle\right).
  \end{multline*}
  This completes the proof of~\eqref{s}.\\
  The two terms in the left hand side of~(\ref{s1}) are dealt with in
  the same way; so, we only give the details for
  $\|V_{1,\lambda,\tilde{\omega}}
  P_{\tilde{\varphi}_{0,\lambda,\zeta}}(u)\|^2$. We compute
  \begin{equation*}
    \begin{split}
      &\|V_{1,\lambda,\tilde{\omega}}
      P_{\tilde{\varphi}_{0,\lambda,\zeta} }(u)\|^2 = \int_{{\er}^d}
      \left|V_{1,\lambda,\tilde{\omega}}(x)\,
        \varphi_0(\lambda,\zeta,0;x)\,\phi(u)(x)\right|^2 dx
      \\&\hskip1cm= \int_{{\er}^d} \left|\sum_{\gamma\in\Z^d}\nabla
        q(x-\lambda\zeta-\gamma)\cdot\tilde\omega_\gamma\right|^2\,
      |\varphi_0(\lambda,\zeta,0;x)\,\phi(u)(x)|^2 dx\\& \hskip1cm\leq
      C\sum_{\gamma\in\Z^d} \|\tilde\omega_\gamma\|^2
      \int_{{\er}^d}\|\nabla q(x-\lambda\zeta)\|^2\,
      |\varphi_0(\lambda,\zeta,0;x)\,\phi_\gamma(u)(x)|^2 dx
    \end{split}
  \end{equation*}
  where, in the last step, as $q$ is compactly supported, the number
  of non vanishing terms of the sum inside the integral is bounded
  uniformly. \\
  Now, by the definition of $\phi_\gamma$ and $\psi_\gamma$, we have
  \begin{equation*}
    \begin{split}
      &\int_{{\er}^d} |\nabla q(x-\lambda\zeta)|^2
      |\varphi_0(\lambda,\zeta,0;x)|^2
      |\phi_{\gamma}(u)(x)|^2dx\\&\hskip1cm\leq 2\int_{{\er}^d}
      |\nabla q(x-\lambda\zeta)|^2
      |\varphi_0(\lambda,\zeta,0;x)|^2\left(|\hat
        u(\gamma)|^2+|\psi_{\gamma}(u)(x)|^2\right)dx \\&\hskip1cm
      \leq C|\hat u(\gamma)|^2+C \int_{{\er}^d} |\nabla
      q(x-\lambda\zeta)|^2 |\varphi_0(\lambda,\zeta,0;x)|^2
      |\psi_{\gamma}(u)(x)|^2 dx.
    \end{split}
  \end{equation*}
  We plug this into the estimate for $\|V_{1,\lambda,\tilde{\omega}}
  P_{\tilde{\varphi}_{0,\lambda,\zeta} }(u)\|^2$ and~\eqref{eq:2}
  yields
  \begin{equation*}
    \begin{split}
      &\|V_{1,\lambda,\tilde{\omega}}
      P_{\tilde{\varphi}_{0,\lambda,\zeta} }(u)\|^2 \\&\hskip1cm\leq
      C\int_{{\er}^d}|\nabla q(x-\lambda\zeta)|^2
      |\varphi_0(\lambda,\zeta,0;x)|^2
      \cdot\sum_{\gamma\in\Z^d}|\phi_{\gamma}(u)(x)|^2dx \\&\hskip6cm+
      C\sum_{\gamma\in\Z^d} \|\tilde\omega_{\gamma}\|^2|\hat
      u(\gamma)|^2\\&\hskip1cm\leq C\|\tilde\omega_{\gamma}\|^2 |\hat
      u(\gamma)|^2+ C\langle \varpi u,u \rangle.
    \end{split}
  \end{equation*}
  The computation for $\|V_{2,\lambda,\tilde{\omega}}
  P_{\tilde{\varphi}_{0,\lambda,\zeta} }(u)\|^2$ is the same as
  \begin{equation*}
    V_{2,\lambda,\tilde{\omega}}=\sum_{\gamma\in\Z^d}
    q(x-\gamma,\zeta,\tilde\omega_\gamma)
  \end{equation*}
  where, denoting the Hessian of $q$ at $x$ by $Q$, we have
  \begin{equation*}
    q(x,\zeta,\tilde\omega_\gamma)=
    \int_0^1\langle Q(x-\lambda(\zeta+t\tilde\omega_\gamma))
    \tilde\omega_\gamma,\tilde\omega_\gamma\rangle(1-t)dt.
  \end{equation*}
  So~\eqref{s1} is proved and the proof of Lemma~\ref{r} is complete.
\end{proof}
\begin{proof}[Proof of Proposition~\ref{pr1}]
  Using~(\ref{eq:26}) and~\eqref{f1}, we write
  \begin{equation*}
    \begin{split}
      \langle v_{1,\lambda,\tilde{\omega},\zeta} u,u \rangle&= \langle
      V_{1,\lambda,\tilde{\omega}} P_{\delta
        \varphi_{0,\lambda,\zeta}}(\sqrt{\varpi} u),P_{\delta
        \varphi_{0,\lambda,\zeta}}(\sqrt{\varpi} u)\rangle
      \\
      &\hskip1cm+\langle V_{1,\lambda,\tilde{\omega}}
      P_{\tilde{\varphi}_{0,\lambda,\zeta}}(u),
      P_{\tilde{\varphi}_{0,\lambda,\zeta}}(u)\rangle\\
      &\hskip2cm+ 2 \text{Re}(\langle V_{1,\lambda,\tilde{\omega}}
      P_{\tilde{\varphi}_{0,\lambda,\zeta}}(u),P_{\delta
        \varphi_{0,\lambda,\zeta}}(\sqrt{\varpi} u)\rangle ).
    \end{split}
  \end{equation*}
  Hence, as $V_{1,\lambda,\tilde{\omega}}$ is bounded,
  \begin{equation}
    \label{gf1}
    \begin{split}
      &\left|\langle v_{1,\lambda,\tilde{\omega}} u,u\rangle-
        \sum_{\gamma \in {\zed}^d} v(\lambda,\zeta)
        \cdot \tilde{\omega}_{\gamma} |\hat{u}(\gamma)|^2\right|\\
      &\leq\left|\langle
        V_{1,\lambda,\tilde{\omega}}P_{\tilde{\varphi}_{0,\lambda,\zeta}}(u),
        P_{\tilde{\varphi}_{0,\lambda,\zeta}}(u)\rangle-\sum_{\gamma
          \in {\zed}^d} v(\lambda,\zeta) \cdot \tilde{\omega}_{\gamma}
        |\hat{u}(\gamma)|^2\right| \\
      &\hskip.5cm+2 \left|\langle V_{1,\lambda,\tilde{\omega}}
        P_{\tilde{\varphi}_{0,\lambda,\zeta}}(u),P_{\delta
          \varphi_{0,\lambda,\zeta}}(\sqrt{\varpi} u)\rangle\right|
      +\left\|P_{\delta \varphi_{0,\lambda,\zeta}}(\sqrt{\varpi}
        u)\right\|^2.
    \end{split}
  \end{equation}
  The Cauchy-Schwarz inequality then yields
  \begin{multline*}
    |\langle V_{1,\lambda,\tilde{\omega}}
    P_{\tilde{\varphi}_{0,\lambda,\zeta}}(u),P_{\delta
      \varphi_{0,\lambda,\zeta}}(\sqrt{\varpi} u)\rangle|\\ \leq
    \alpha \|V_{1,\lambda,\tilde{\omega}}
    P_{\tilde{\varphi}_{0,\lambda,\zeta}}(u)\|^2 + \frac{4}{\alpha}
    \|P_{\delta \varphi_{0,\lambda,\zeta}}(\sqrt{\varpi} u)\|^2.
  \end{multline*}
  Combining this with~\eqref{gf1},~\eqref{s1} and~\eqref{s}, we obtain
  \begin{multline*}
    \left|\langle v_{1,\lambda,\tilde{\omega}} u,u\rangle-\sum_{\gamma
        \in {\zed}^d}[v(\lambda,\zeta) \cdot \tilde{\omega}_{\gamma}] \cdot 
      |\hat{u}(\gamma)|^2\right| \\\leq C\left( \alpha \sum_{\gamma
        \in {\zed}^d} \|\tilde{\omega}_{\gamma}\|^2
      \cdot |\hat{u}(\gamma)|^2 + \left(1+\frac{1}{\alpha}\right)\langle 
      \varpi u,u\rangle\right).
  \end{multline*}
  This completes the proof of~(\ref{eq:30}).\\
  Using~\eqref{f1} and the expansion done above for $\langle
  v_{1,\lambda,\tilde{\omega}} u,u\rangle$ , we compute
  \begin{equation*}
    |\langle v_{2,\lambda,\tilde{\omega}}u,u\rangle|
    \leq 2 \|V_{2,\lambda,\tilde{\omega}}
    P_{\tilde{\varphi}_{0,\lambda,\zeta}}(u)\|^2
    + 2 \|P_{{\delta\varphi}_{0,\lambda,\zeta}}(\sqrt{\varpi}u) \|^2.
  \end{equation*}
  Combining this with~\eqref{s1}, we get
  \begin{equation*}
    |\langle v_{2,\lambda,\tilde{\omega}}u,u\rangle| \leq C
    \left(\sum_{\gamma \in
        {\zed}^d} \|\tilde{\omega}_{\gamma} \|^2 \cdot |u(\gamma)|^2 +
      \langle \varpi u, u\rangle    \right).
  \end{equation*}
  The estimates for $\|V_{1,\lambda,\tilde{\omega}}
  P_{\varphi_{0,\lambda,\zeta}}(u) \|^2$ and
  $\|V_{2,\lambda,\tilde{\omega}} P_{\varphi_{0,\lambda,\zeta}}(u)
  \|^2$ are obtained in the same way. This completes the proof
  of~(\ref{eq:31}), hence, of Proposition~\ref{pr1}.
\end{proof}


%


\def\cprime{$'$} \def\cydot{\leavevmode\raise.4ex\hbox{.}}

\end{document}